\documentclass[english,a4paper,10pt]{article}
\usepackage[utf8]{inputenc}
\usepackage[T1]{fontenc}
\usepackage[english]{babel}

\usepackage[a4paper, total={6in, 8in}]{geometry}

\usepackage{graphbox}
%

\usepackage{framed}
\usepackage{amssymb}
\usepackage{amsmath}
\usepackage{amsthm}
\usepackage{scalerel}

\usepackage[disable]{todonotes}
\setlength{\marginparwidth}{2.5cm}

\usepackage{hyperref}
\usepackage[vlined,ruled,noend,linesnumbered]{algorithm2e}
\SetArgSty{text}
\DontPrintSemicolon
\usepackage{bussproofs}
\EnableBpAbbreviations
\usepackage{mathtools}
\usepackage{tikz}
\usetikzlibrary{shapes.geometric, patterns}
\usetikzlibrary{arrows.meta}
\usetikzlibrary{calc}
\usetikzlibrary{shapes.misc, positioning}
\usepackage{csquotes}
\usepackage{booktabs}
\usepackage{thmtools}
\usepackage{thm-restate}
\usepackage{enumitem}

\relpenalty10000
\binoppenalty10000

\usepackage{stmaryrd}

 \newtheorem{theorem}{Theorem}
 \newtheorem{lemma}[theorem]{Lemma}
 
 \theoremstyle{definition}
 \newtheorem{definition}[theorem]{Definition}
 \newtheorem{example}{Example}



\newif\ifhideproofs

\newif\iftechnicalReport

\newif\ifplotDataInTikz

\newif\ifappendix

\plotDataInTikztrue 
\hideproofsfalse
\technicalReporttrue 

\ifhideproofs
  \usepackage{environ}
  \NewEnviron{hide}{}

\fi





\newcommand{\alisa}[2][]{\ttodo{#1}{Alisa}{cyan}{#2}}
\newcommand{\stefan}[2][]{\ttodo{#1}{Stefan}{yellow}{#2}}
\newcommand{\ttodo}[4]{\ifthenelse{\equal{#1}{inline}}{\todo[inline, author=#2, color = 
#3]{#4}}{\todo[color=#3]{#2: #4}}}

\newcommand{\hide}[1]{}


\newcommand{\wlg}{w.l.o.g.\ }
\newcommand{\Wlog}{W.l.o.g.\ }
\newcommand{\wrt}{w.r.t.\ }
\newcommand{\st}{s.t.\ }
\newcommand{\ie}{i.e.\ }
\newcommand{\eg}{e.g.\ }
\newcommand{\etc}{etc.\ }
\newcommand{\cf}{cf.\ }

\newcommand{\aka}{a.k.a.\ }



\newlength{\myl}
\newcommand{\longsquigarrow}[1]{
    \settowidth{\myl}{$~_{#1}$}
    \raisebox{-0.01cm}{\xymatrix@C=\myl{
            {}\ar@{~>}[r]^{~_{#1}}&{}
        }
    }
}


\newcommand{\blue}[1]{{#1}}





\newcommand{\ACzero}{\ensuremath{\mathsf{AC^0}}\xspace}
\newcommand{\FO}{\textup{FO}}

\newcommand{\SRIQ}{\mathcal{SRIQ}\xspace}
\newcommand{\HornSRIQ}{\textsl{Horn-}$\SRIQ$\xspace}

\newcommand{\leaf}{\ensuremath{\mathsf{leaf}}\xspace}
\newcommand{\sink}{\ensuremath{\mathsf{sink}}\xspace}

\newcommand{\card}[1]{\lvert #1\rvert}

%

\newcommand{\PTime}{\textsc{P}\xspace}
\newcommand{\PSpace}{\textsc{PSpace}\xspace}
\newcommand{\NP}{\textsc{NP}\xspace}

\newcommand{\NExpTime}{\textsc{NExpTime}\xspace}

\newcommand{\Amc}{\ensuremath{\mathcal{A}}\xspace}

\newcommand{\Imc}{\ensuremath{\mathcal{I}}\xspace}

\newcommand{\Lmc}{\ensuremath{\mathcal{L}}\xspace}

\newcommand{\Omc}{\ensuremath{\mathcal{O}}\xspace} 
 
\newcommand{\Tmc}{\ensuremath{\mathcal{T}}\xspace}

\newcommand{\EL}{\ensuremath{\mathcal{E}\hspace{-0.1em}\mathcal{L}}\xspace}

\newcommand{\ELI}{\ensuremath{\mathcal{ELI}}\xspace}
\renewcommand{\L}{\ensuremath{\mathcal{L}}\xspace}
\newcommand{\Lcq}{\ensuremath{\L_{\textit{cq}}}\xspace}
\newcommand{\DLLite}{\ensuremath{\textsl{DL-Lite}}\xspace}
\newcommand{\DLLiteR}{\ensuremath{\textsl{DL-Lite}_R}\xspace}

\newcommand{\q}{\ensuremath{\mathbf{q}}\xspace} 
\newcommand{\x}{\ensuremath{\vec{x}}\xspace}
\newcommand{\y}{\ensuremath{\vec{y}}\xspace}
\renewcommand{\a}{\ensuremath{\vec{a}}\xspace}

\newcommand{\tup}[1]{\langle #1 \rangle}

\newcommand{\NC}{\ensuremath{\textsf{N}_\textsf{C}}\xspace}

\newcommand{\sig}{\ensuremath{\textsf{sig}}\xspace}

\newcommand{\answer}{\ensuremath{\vec{a}}\xspace}

\newcommand{\ind}{\ensuremath{\text{\textsf{ind}}}\xspace}
%

%


%







\newcommand{\ex}[1]{\ensuremath{\textsf{\upshape{#1}}}\xspace}
 
\newcommand{\p}{\ensuremath{\mathcal{P}}\xspace}

\newcommand{\R}{\ensuremath{\mathfrak{D}}\xspace}

\newcommand{\Rx}{\ensuremath{\R_\mathsf{x}}\xspace}
\newcommand{\Rcq}{\ensuremath{\R_\mathsf{cq}}\xspace}
\newcommand{\Rsk}{\ensuremath{\R_\mathsf{sk}}\xspace}
\newcommand{\Rtcq}{\ensuremath{\R_\mathsf{tcq}}\xspace}
\newcommand{\Rtsk}{\ensuremath{\R_\mathsf{tsk}}\xspace}

\newcommand{\OP}{\ensuremath{\mathsf{OP}}\xspace}
\newcommand{\OPx}{\ensuremath{\OP_\mathsf{x}}\xspace}
\newcommand{\OPtx}{\ensuremath{\OP_\mathsf{tx}}\xspace}
\newcommand{\sk}{\ensuremath{_\mathsf{sk}}\xspace}
\newcommand{\cq}{\ensuremath{_\mathsf{cq}}\xspace}

\newcommand{\el}{\ensuremath{\ell}\xspace}

\newcommand{\ds}{\ensuremath{\mathcal{D}}\xspace}



\newcommand{\MOPO}{\textbf{(MP)}\xspace}
\newcommand{\CONJ}{\textbf{(C)}\xspace}
\newcommand{\TAUT}{\textbf{(T)}\xspace}
\newcommand{\EXISTS}{\textbf{(E)}\xspace}
\newcommand{\gMOPO}{\textbf{($\mathbf{MP_s}$)}\xspace}
\newcommand{\gCONJ}{\textbf{($\mathbf{C_s}$)}\xspace}
\newcommand{\gEXISTS}{\textbf{($\mathbf{E_s}$)}\xspace}
\newcommand{\gEXISTSb}{\textbf{($\mathbf{E_s'}$)}\xspace}
\newcommand{\tMOPO}{\textbf{(TMP)}\xspace}
\newcommand{\COAL}{\textbf{(COAL)}\xspace}
\newcommand{\SEP}{\textbf{(SEP)}\xspace}
\newcommand{\LOR}{\textbf{(DISJ)}\xspace}
\newcommand{\tw}{\ensuremath{\mathsf{t}}\xspace}

\newcommand{\m}{\ensuremath{\mathfrak{m}}\xspace}
\newcommand{\mtree}{\ensuremath{\m_{\mathsf{t}}}\xspace}
\newcommand{\msize}{\ensuremath{\m_{\mathsf{s}}}\xspace}


\newcommand{\lnext}{\scaleobj{0.7}{\bigcirc}}

\newcommand{\luntil}{\,\mathcal{U}}

\newcommand{\lsince}{\,\mathcal{S}}
\newcommand{\tem}{\ensuremath{\text{\textsf{tem}}}\xspace}
\newcommand{\Imf}{{\ensuremath{\mathfrak{I}}}\xspace}
\newcommand{\Zbb}{{\ensuremath{\mathbb{Z}}}\xspace}

\begin{document}



\title{Finding Good Proofs for Answers to Conjunctive Queries Mediated by
Lightweight Ontologies (Technical Report)}

\author{Christian Alrabbaa \and Stefan Borgwardt \and Patrick Koopmann \and Alisa Kovtunova}

\date{}
\maketitle

\begin{abstract}
In ontology-mediated query answering, access to incomplete data sources is
mediated by a conceptual layer constituted by an ontology. To correctly compute
answers to queries, it is necessary to perform complex reasoning over the
constraints expressed by the ontology. In the literature, there exists a
multitude of techniques incorporating the ontological knowledge into queries.
However, few of these approaches were designed for comprehensibility of the
query answers. In this article, we try to bridge these two qualities by adapting
a proof framework originally applied to axiom entailment for conjunctive query
answering. We investigate the data and combined complexity of determining the
existence of a proof below a given quality threshold, which can be measured in
different ways. By distinguishing various parameters such as the shape of a
query, we obtain an overview of the complexity of this problem for the
lightweight ontology languages \DLLiteR and \EL, and also have a brief
look at temporal query answering.
\todo{optional: remove \EL if we need space}
\end{abstract}

\section{Introduction}


Explaining description logic (DL) reasoning has a long tradition, starting with the first works on \emph{proofs} for standard DL entailments~\cite{DeMc-96,DBLP:conf/ecai/BorgidaFH00}.
A popular and very effective method \blue{is} \emph{justifications}, which simply point out the axioms 
from an ontology that are responsible for an 
entailment~\cite{ScCo03,DBLP:conf/ki/BaaderPS07,Pena-09,Horr-11}.
More recently, work has resumed on techniques to find proofs for explaining more complex logical consequences~\cite{DBLP:conf/semweb/HorridgePS10,KaKS-DL17,LPAR23:Finding_Small_Proofs_for,ABB+-DL20,ABB+-CADE21}.
On the other hand, if a desired entailment does not hold, one needs different explanation 
techniques such as
abduction~\cite{DBLP:conf/ijcai/Koopmann21,EX_RULES_ABDUCTION,%
DBLP:conf/kr/CalvaneseOSS12} or
counterinterpretations~\cite{DBLP:conf/ki/AlrabbaaHT21}.
%
Explaining answers to conjunctive queries (CQs) has also been investigated before, in the form of abduction for missing answers over \DLLite ontologies~\cite{DBLP:conf/kr/CalvaneseOSS12}, provenance for positive answers in \DLLite and \EL~\cite{DBLP:conf/ijcai/CalvaneseLOP019,DBLP:conf/ijcai/BourgauxOPP20}, as well as proofs for \DLLite query answering~\cite{DBLP:conf/otm/BorgidaCR08,Stefanoni-11,DBLP:conf/ekaw/CroceL18}.

Here, we also investigate proofs for CQ answers, inspired
by~\cite{DBLP:conf/otm/BorgidaCR08,Stefanoni-11,DBLP:conf/ekaw/CroceL18}, but
additionally consider the problem of generating \emph{good} proofs according to
some quality measures and provide a range of complexity results mostly focussing on
\DLLiteR. 
In addition to classical OMQA, we also have a brief look at
explaining inferences over temporal data using a query language incorporating
metric temporal operators.
Our results are based on a framework developed for proofs of standard DL
reasoning~\cite{LPAR23:Finding_Small_Proofs_for}.
There, proofs are formalized as directed, acyclic hypergraphs and proof quality can be measured in different ways.
We mainly consider the \emph{size} (the number of formulas) of a proof as well as its \emph{tree size}, which corresponds to the size when the proof is presented in a tree-shaped way (which may require repeating subproofs), as it is often done in practice~\cite{KaKS-DL17,DBLP:conf/dlog/AlrabbaaBDFK20}.
The quest for good proofs is formalized as a search problem in a so-called \emph{derivation 
structures} produced by a \emph{deriver}, which specifies the possible
inferences.

In this paper, we consider two different kinds of derivers for
generating proofs for CQ answers. These loosely correspond to the approaches
in~\cite{DBLP:conf/otm/BorgidaCR08,Stefanoni-11,DBLP:conf/ekaw/CroceL18}, but
are generalized to apply to a larger class of DLs. Specifically, our structures
rely on a translation of DLs to \emph{existential
rules}~\cite{DBLP:journals/ws/CaliGL12}, and thus apply to any DL that can be
expressed in this formalism.
%
%
\blue{One} deriver, denoted by~\Rcq and inspired
by~\cite{Stefanoni-11,DBLP:conf/ekaw/CroceL18}, focuses on the derivation of
CQs, which can be derived from other CQs and ontology axioms.
Inferences in~\Rcq are logically sound, but can be harder to understand.
The reason is the local scope of existential quantification in a CQ, which
forces atoms connected by the same variables to be carried along inferences
they are not relevant for. This problem is circumvented with the deriver~\Rsk,
which relies on a Skolemized version of the TBox. This allows one to focus
on inferences of single atoms that are only later aggregated into the final CQ,
leading to simpler sentences within the proof.
%
Focusing on the particular case of \DLLiteR, 
we consider the complexity of the decision problems of finding proofs of (tree) size below a given threshold~$n$ in these derivation structures.
We find that for \DLLiteR and any DL in which CQ answering is UCQ-rewritable, all of these problems (regardless of derivation structure and quality measure) are in \ACzero in data complexity.
In combined complexity, these problems are \NP-complete in general, but polynomial when considering only acyclic queries and tree size.
%
%
To explain answers to \emph{temporal} queries, we extend our derivers with new
inference schemes to deal with metric temporal operators,  allowing us to lift
some of our results also to this setting.
\stefan{TODO: cite extended version on arxiv!}


\section{Preliminaries}\label{sec:preliminaries}


\paragraph*{Proofs}

In our setting, a
\emph{logic}~$\Lmc=(\mathcal{S}_\Lmc,\models_\Lmc)$ consists of a set
$\mathcal{S}_\Lmc$ of
\emph{$\Lmc$-sentences} and a \emph{consequence relation}
\mbox{${\models_\Lmc}\subseteq P(\mathcal{S}_\Lmc)\times \mathcal{S}_\Lmc$} between
\emph{\Lmc-theories} (subsets of
\Lmc-sentences) and
single \Lmc-sentences; we usually write only $\models$ instead of $\models_\Lmc$.
We assume that the \emph{size}~$|\eta|$ of an \L-sentence~$\eta$ is defined in
some way, \eg by the number of symbols in~$\eta$.
We require that \Lmc is \emph{monotonic}, \ie that $\Tmc\models\eta$ implies $\Tmc'\models\eta$ for all $\Tmc'\supseteq\Tmc$.
%
%
For example, \L could be \emph{first-order logic} or some DL.

As in~\cite{LPAR23:Finding_Small_Proofs_for,ABB+-DL20,ABB+-CADE21}, we view
proofs as directed hypergraphs (see the appendix for details).

\begin{definition}[Derivation Structure]\label{def:derivation-structure}
  A \emph{derivation structure} $\ds = (V, E, \el)$ over a theory~$\mathcal{U}$ is a directed, labeled hypergraph that is
  \begin{itemize}
  \item \emph{grounded}, \ie
  every leaf $v$ in~\ds is labeled by $\el(v)\in\mathcal{U}$; and
  \item \emph{sound}, \ie for every hyperedge $(S,d)\in E$, the entailment
$\{\el(s)\mid s\in S\}\models\el(d)$ holds.
\end{itemize}
\end{definition}

We call hyperedges $(S,d)\in E$ \emph{inferences} or \emph{inference steps},
with $S$ being the \emph{premises} and $d$ the \emph{conclusion}, and may write them like 
\begin{center}
 \AXC{$p$}
 \AXC{$p\to q$}
 \BIC{$q$}
 \DP
 \quad
 or
 \quad
 \scalebox{0.8}{
 \begin{tikzpicture}[
  scale=0.8,
  baseline=(x3.north),
  block/.style={
  draw,
  rounded rectangle,
  minimum width={width("$A \sqsubseteq \exists r.(\exists r.A)$")-5pt}, 
  },
  tbox/.style={
  draw=gray, text=gray,
  rounded rectangle,
  minimum width={width("$A \sqsubseteq \exists r.(\exists r.A)$")-5pt},
  },
  he/.style={rounded corners=5pt,->}]
  \path (-1,3.5) node[block] (x1) {$p$}
       (5,3.5) node[block] (x2) {$p\to q$}
       (2,2.75) node[block] (x3) {$q$};
  \draw[he] (x1) -- ($(x1)!0.5!(x2)$)-- (x3);
  \draw[he] (x2) -- ($(x1)!0.5!(x2)$)-- (x3);
  \end{tikzpicture}
  }
\end{center}
\emph{Proofs} are special derivation structures that derive a goal sentence.

\begin{definition}[Proof]\label{def:proof}
  Given a sentence~$\eta$ and a theory~$\mathcal{U}$, a \emph{proof of $\mathcal{U}\models\eta$} is a \blue{finite}
  derivation structure $\p = (V, E,\el)$ over~$\mathcal{U}$ such that
  \begin{itemize}
    \item\label{item1:def-proof-non-redundancy} \p contains exactly one
        sink~$v_\eta\in V$, which is labeled by~$\eta$,
    \item \p is acyclic, and
    \item every vertex has at most one incoming hyperedge, \ie there exist no
two hyperedges $(S_1,v)$, $(S_2,v)\in E$ with $S_1\neq S_2$.
  \end{itemize}
A \emph{tree proof} is a proof that is a tree.
A \emph{subproof} $S$ of a hypergraph~$H$ is a subgraph of~$H$ that is a proof 
with $\leaf(S)\subseteq\leaf(H)$.
\end{definition}

To compute proofs, we assume that there is some reasoning system or calculus that defines
a derivation structure for a given entailment~$\eta$, and \blue{the structure} may contain several
proofs for that entailment. Formally, a \emph{deriver} $\R$ for a
logic $\Lmc$ takes as input an $\Lmc$-theory $\mathcal{U}$ and an
$\Lmc$-sentence~$\eta$, and returns a \blue{(possibly infinite)} derivation structure
$\R(\mathcal{U},\eta)$ over $\mathcal{U}$ that describes all inference
steps \blue{that} $\R$ could perform in order to derive $\eta$ from $\mathcal{U}$.
This derivation structure is not necessarily computed explicitly, but can
be accessed through an oracle (which checks, for example, whether an inference conforms to the underlying calculus).
The task of finding a
good proof then corresponds to finding a \blue{(finite)} proof that can be homomorphically
mapped into this derivation structure and which is minimal according to some measure of proof 
quality.
We consider two such measures here: the \emph{size} of a proof $\p=(V,E,\ell)$ is $\msize(\p):=|V|$,\footnote{Since every vertex has at most one incoming hyperedge, the size of $E$ is at most quadratic in~$|V|$.} 
and the \emph{tree size} $\mtree(\p)$
is the size of a tree unraveling of~\p~\cite{ABB+-CADE21}.
The \emph{depth} of~\p is the length of the longest path from a leaf to the sink (see appendix).

\paragraph*{DLs and Existential Rules}

We assume that the reader is familiar with DLs, in particular \DLLiteR~\cite{JAR-2007}, 
where theories $\mathcal{U}=\Tmc\cup\Amc$ are called \emph{ontologies} or \emph{knowledge bases} and are composed of a TBox~\Tmc and an ABox~\Amc.
Many DL ontologies can be equivalently expressed using the formalism of existential
rules~\cite{DBLP:journals/ws/CaliGL12}. Existential rules are first-order sentences of the form
$\forall\y,\vec{z}.\,\psi(\y,\vec{z})\to\exists\vec{u}.\,\chi(\vec{z},\vec{u})$, with the
\emph{body} $\psi(\y,\vec{z})$ and the \emph{head} $\chi(\vec{z},\vec{u})$ being conjunctions of atoms of the form $A(x)$ or $P(x_1, x_2)$, for a concept name $A$, role name $P$ and
terms $x$, $x_1$ and $x_2$, which are individual names or variables from
$\vec{z}$, $\vec{u}$ and~$\y$. We usually omit the universal
quantification.
Notable DLs
that can be equivalently expressed as sets of existential rules are \EL,
\HornSRIQ and \DLLiteR.

\paragraph*{Conjunctive Queries}

In this paper, we want to construct proofs for ontology-mediated conjunctive query entailments.
A \emph{conjunctive query (CQ)} $\q(\x)$ is an expression of the
form $\exists \y.\,\phi(\x,\y)$, where $\phi(\x,\y)$ is a conjunction of atoms. The variables in $\x = (x_1,\dots, x_n)$ are called \emph{answer
variables} and the variables in $\y$ \emph{existentially quantified variables}.
If \blue{$n=0$}, then $\q(\x)$ is called \emph{Boolean}.
ABox assertions are a special case of Boolean CQs with only one atom
an no variables.
%
A tuple $\answer$ of individual names from $\Amc$ is a \emph{certain answer} to
$\q(\x)$ over $\Tmc\cup\Amc$, in symbols $\Tmc\cup\Amc\models\q(\answer)$, if,
for any model of $\Tmc\cup\Amc$, the sentence $\q(\answer)$ is true in this
model.
%
Any CQ $\q(\x)=\exists \y.\,\phi(\x,\y)$ is associated with the set of atoms in 
$\phi$, so we will write, \eg $A(z) \in \q$.

\begin{example}\label{ex}
For the following $\DLLiteR$ ontology and query, we have
$\Tmc\cup\{B(\ex{b})\}\models \q(\ex{b})$.
\begin{align*}
\Tmc &= \{A \sqsubseteq \exists R, \qquad  \exists R^- \sqsubseteq \exists T, \qquad
B \sqsubseteq \exists P, \qquad \exists P^- \sqsubseteq \exists S, \qquad 
P \sqsubseteq R^- \} \\
\q(y'') &= \exists x, x', x'', y, y', z, z'.\, R(x, y) \,\land\, T (y, z) \,\land\, T (y', z) \,\land\, R(x', y') \,\land\,  S(x', z') \\ &\hspace*{8.5cm} \,\land\, S(x'', z') \,\land\, P(y'',x'').
\end{align*}
In the next section, we explore different ways to explain this inference (see Figures~\ref{fig:ex-existential} and~\ref{fig:ex-skolem}).
\end{example}

\section{Derivation Structures for Certain Answers}\label{sec:DS4OMQA}

In the following, let $\Tmc\cup\Amc$ be a knowledge base in some DL \Lmc, $\q$ a conjunctive query, and $\vec{a}$ a certain answer, \ie $\Tmc\cup\Amc\models\q(\vec{a})$, which we want to explain.
We can use derivation structures over \Lcq (the extension of \Lmc with all Boolean CQs) to explain query answers.
For example, the following derivation step involving the ontology from Example~\ref{ex} is a sound inference:
\begin{center}
\AXC{$B(\ex{b})$}
\AXC{$\Tmc$}
\BIC{$\q(\ex{b})$}
\DP
\end{center}

However, to define a derivation structure that yields proofs suitable for
explanations to users, inferences that only make small deduction steps are more valuable. 
For this purpose, we define derivers that
capture which inference steps are admitted.
For TBox entailment,
in~\cite{LPAR23:Finding_Small_Proofs_for,ABB+-DL20,ABB+-CADE21}, we considered
derivers based on the inference schemas used by a
consequence-based reasoner. To obtain proofs for CQ entailment, we follow the
ideas of \emph{chase} procedures that replace atoms in CQs by other atoms by
\enquote{applying} rules to
them~\cite{DBLP:journals/tcs/FaginKMP05,DBLP:journals/ws/CaliGL12,JAR-2007,DBLP:conf/otm/BorgidaCR08}.
%
We will introduce two derivers that represent different paradigms of what constitutes a proof.

\subsection{The CQ Deriver}
\blue{Similarly to the approach used in~\cite{Stefanoni-11,DBLP:conf/ekaw/CroceL18}, inferences in our first deriver, $\Rcq$, always produce Boolean CQs.}
This deriver is defined for DLs that can be expressed using existential
rules.
%
%
%
An inference step is obtained by matching the left-hand side of a rule to part of a CQ and then replacing it by the right-hand side.
%
%
%
For example, starting from $\exists z.\,{P}(\ex{b},z)$ and
${P}(x,y)\to{R}(y,x)$, we can apply the substitution $\{x\mapsto\ex{b}, y
\mapsto z\}$ to obtain $\exists z.\,R(z,\ex{b})$.
Additionally, we allow to keep any of the replaced atoms from the original CQ,
\eg to produce the conclusion $\exists z.\,{P}(\ex{b},z)\land R(z,\ex{b})$.
A second type of inference allows one to combine two Boolean CQs using
conjunction.
To duplicate variables, we
additionally introduce tautological rules such as $P(x,z)
\to \exists z'.\, P(x,z')$, which yields $\exists z,z'.\,{P}(\ex{b},z)\land
{P}(\ex{b},z')$ when combined with $\exists z.\,{P}(\ex{b},z)$.
%
Finally, we use an inference schema that allows us to replace constants by variables, \eg to capture that $\exists z.\,P(\ex{b},z)$ implies $\exists x,z.\, P(x,z)$.

\begin{figure}
  \begin{framed}
    \begin{minipage}{0.45\textwidth}
      \centering
      \AXC{$\exists\vec{x}.\,\phi(\vec{x})$}
      \AXC{$\psi(\vec{y},\vec{z})\to\exists\vec{u}.\,\chi(\vec{z},\vec{u})$}
      \RL{\MOPO}
      \BIC{$\exists\vec{w}.\rho(\vec{w})$}
      \DP
    \end{minipage}
    \hfill
    \begin{minipage}{0.45\textwidth}
      \centering
      \AXC{$\exists\vec{x}.\,\phi(\vec{x})$}
      \AXC{$\exists\vec{y}.\,\psi(\vec{y})$}
      \RL{\CONJ}
      \BIC{$\exists\vec{x},\vec{y}'.\phi(\vec{x})\land\psi(\vec{y}')$}
      \DP
    \end{minipage}

    \bigskip

    \begin{minipage}{0.45\textwidth}
      \centering
      \AXC{}
      \RL{\TAUT}
      \UIC{$\phi(\vec{x},\vec{y})\to\exists\vec{x}.\,\phi(\vec{x},\vec{y})$}
      \DP
    \end{minipage}
    \hfill
    \begin{minipage}{0.45\textwidth}
      \centering
      \AXC{$\exists\vec{x}.\,\phi(\vec{x},\vec{a})$}
      \RL{\EXISTS}
      \UIC{$\exists\vec{x},\vec{y}.\,\phi(\vec{x},\vec{y})$}
      \DP  
    \end{minipage}
  \end{framed}
  \vspace*{-\baselineskip}
  \caption{Inference schemas for \Rcq. \blue{\MOPO and \TAUT refer to \emph{modus ponens} and \emph{tautology}.}}
  \label{fig:dcq}
\end{figure}

The detailed inference schemas can be found in Figure~\ref{fig:dcq}. \MOPO is admissible only if there exists a substitution $\pi$ such
that $\pi(\psi(\vec{y},\vec{z}))\subseteq\phi(\vec{x})$, and then
$\rho(\vec{w})$ is the result of replacing \emph{any subset of}
$\pi(\psi(\vec{y},\vec{z}))$ in $\phi(\vec{x})$ by \emph{any subset of}
$\pi(\chi(\vec{z},\vec{u}'))$, where the variables $\vec{u}$ are renamed into
new existentially quantified variables $\vec{u}'$ to ensure that they are
disjoint with $\vec{x}$.
In \CONJ, we again rename the variables $\vec{y}$ to $\vec{y}'$ to avoid overlap
with $\vec{x}$. Since every ABox assertion corresponds to a ground CQ,
this inference also allows one to collect ABox assertions into a single CQ.
\TAUT \blue{introduces an existential rule that} allows us, together with~\MOPO, to create copies of variables in CQs (see Fig.~\ref{fig:ex-existential}).
Finally, \EXISTS transforms individual names \blue{in some positions} into existentially quantified variables.

\begin{definition}[CQ Deriver]\label{def:cq-ds}
  $\Rcq(\Tmc\cup\Amc,\q(\vec{a}))$ is a derivation structure over $\Tmc\cup\Amc$
  with vertices labeled by the axioms in $\Tmc\cup\Amc$ and all Boolean CQs over the signature of $\Tmc\cup\Amc$, 
  and its hyperedges represent all \blue{possible} instances of~\MOPO, \CONJ, \TAUT, and~\EXISTS over these vertices.
%
An \emph{(admissible) proof in $\Rcq(\Tmc\cup\Amc,\q(\vec{a}))$} is a proof of $\Tmc\cup\Amc\models\q(\vec{a})$ that has a homomorphism into this derivation structure.
\end{definition}
%
It is easy to check that the inferences used by \Rcq are sound. Moreover, we
can show that \blue{they are} complete, \ie that any CQ
entailed by $\Amc\cup\Tmc$ has a proof in $\Rcq(\Tmc\cup\Amc,\q(\vec{a}))$
(see Lemma~\ref{lem:transformation}).
%
A proof for Example~\ref{ex} \wrt \Rcq is depicted in Figure~\ref{fig:ex-existential}.

\begin{figure}
\centering\begin{footnotesize}
\begin{tikzpicture}[scale=0.8,
block/.style={
draw,
rounded rectangle,
minimum width={width("$A \sqsubseteq \exists r.(\exists r.A \sqcap B)$")-5pt}, minimum height=1.8em,
},
tbox/.style={
draw=gray, text=gray,
rounded rectangle,
minimum width={width("$A \sqsubseteq \exists r.(\exists r.A \sqcap B)$")-5pt},
minimum height=1.8em,
},
he/.style={rounded corners=10pt,->}]
\path (-1,1.8) node[block] (x1) {$B(\ex{b})$}
		(5,1.8) node[tbox] (x2) {$B\sqsubseteq\exists P$}
		(2,0.8) node[block] (x3) {$\exists x''.\, P(\ex{b},x'')$}
		(8,0.8) node[tbox] (y) {$\exists P^-\sqsubseteq \exists S$}
		(5,-0.3) node[block] (y3) {$\exists x'',z'\,
P(\ex{b},x'')\,\land\,S(x'',z')$}
		(-3,-0.3) node[tbox] (y6) {$P\sqsubseteq R^-$}
		(1,-1.4) node[block] (y5) {$\exists x'',z'.\,
R(x'',\ex{b})\,\land\,S(x'',z')\,\land\,P(\ex{b},x'')$}
		(9,-1.4) node[tbox] (y8){$\exists R^-\sqsubseteq \exists T$}
        (6.5,-2.5) node[block] (y7) {$\exists x'',z,z'.\,
R(x'',\ex{b})\,\land\,T(\ex{b},z)\,\land\,S(x'',z')\,\land\,P(\ex{b},x'')$}
         (-2.1,-3.2) node[tbox, dashed] (z0) {$R(x,y) \,\land\, T(y,z) \to \exists
y'.\, R(x,y') \,\land\, T(y',z) $} 
	   (5.8,-4.2) node[block] (y9) {$\exists x'',y',z,z'.\,
R(x'',\ex{b})\,\land\,T(\ex{b},z)\,\land\,T(y',z)\,\land\,
R(x'',y')\,\land\dots $}
	   	(-2,-4.8) node[tbox,dashed] (y4)  {$S(x,z)\to S(x,z) $}
	    (6.5,-5.7) node[block] (z1) { \dots }
	     (-1.7,-6.2) node[tbox,dashed] (z11){$R(x'',y')\,\land\,S(x'',z') \to \exists
x'.\, R(x',y')\,\land\,S(x',z')$}
		    (6.5,-7.6) node[block] (z2) {\dots}
		     (-1.5,-7.6) node[tbox,dashed] (z21){$R(x'',y) \to\exists x.\, R(x,y)$}
	     (2.5,-8.7) node[block] (z3) {$\exists x,x',x'',y',z,z'.\,
R(x,\ex{b})\,\land\,T(\ex{b},z)\,\land\,T(y',z)\,\land\,R(x',y')\,\land\,S(x',z'
)\,\land\,S(x'',z')\,\land\,P(\ex{b},x'')$}
	  (2.5,-9.9) node[block,,fill=gray!20] (z) { $\exists x,x', x'', y, y',z, z'.\, R(x, y)
\,\land\, T (y, z) \,\land\, T (y', z) \,\land\, R(x', y') \,\land\,  S(x', z')
\,\land\, S(x'', z') \,\land\, P(\ex{b},x'')$};
\draw[he, draw=gray,text=gray]  ($(z0)+(0,0.7)$) --  (z0) node[yshift=11pt,xshift=6pt] {\tiny{\TAUT}};	
\draw[he, draw=gray,text=gray]  ($(y4)+(0,0.7)$) --  (y4) node[yshift=11pt,xshift=6pt] {\tiny{\TAUT}};	
\draw[he, draw=gray,text=gray]  ($(z11)+(0,0.7)$) --  (z11) node[yshift=11pt,xshift=6pt] {\tiny{\TAUT}};		
\draw[he, draw=gray,text=gray]  ($(z21)+(0,0.7)$) --  (z21) node[yshift=11pt,xshift=6pt] {\tiny{\TAUT}};		
\draw[he] (x1) -- ($(x1)!0.5!(x2)$) node[yshift=3pt] {\tiny{\MOPO}} -- (x3);
\draw[he] (x2) -- ($(x1)!0.5!(x2)$) -- (x3);
\draw[he] (y) -- ($(y)!0.5!(x3)$) node[yshift=3pt] {\tiny{\MOPO}} -- (y3);
\draw[he] (x3) -- ($(y)!0.5!(x3)$) -- (y3);
\draw[he] (y3) -- ($(y3)!0.5!(y6)$) node[yshift=3pt] {\tiny{\MOPO}}  -- (y5);
\draw[he] (y6) -- ($(y3)!0.5!(y6)$) -- (y5);
\draw[he] (y5) -- ($(y5)!0.5!(y8)+(1.5,0)$) node[yshift=3pt] {\tiny{\MOPO}}  -- (y7);
\draw[he] (y8) -- ($(y5)!0.5!(y8)+(1.5,0)$) -- (y7);
\node(y9') at ($(y9)+(0.7,0.22)$){};
\draw[he] (y7) -- (y9');
\draw[he] (z0) -- ($(y7)!0.5!(y9')$) node[xshift=7pt,yshift=-4pt] {\tiny{\MOPO}}  -- (y9');
\draw[he] ($(y9')-(0,0.65)$) -- (z1);
\draw[he] (y4) -- ($(y9')!0.5!(z1)$) node[xshift=7pt,yshift=-4pt] {\tiny{\MOPO}}  -- (z1);
\draw[he] (z1) -- (z2);
\draw[he] (z11) -- ($(z1)!0.35!(z2)$) node[xshift=7pt,yshift=-4pt] {\tiny{\MOPO}} -- (z2);
\draw[he] (z2) -- ($(z2)!0.5!(z21)$) node[yshift=3pt] {\tiny{\MOPO}} -- (z3);
\draw[he] (z21) -- ($(z2)!0.5!(z21)$)-- (z3);
\draw[he] (z3) -- (z) node [xshift=6pt,yshift=11pt] {\tiny{\EXISTS}};
\end{tikzpicture}
\end{footnotesize}
\caption{A CQ proof for Example~\ref{ex} (inferences \EXISTS and \TAUT are delayed to
the last steps)}
\label{fig:ex-existential}
\end{figure}

\subsection{Skolemized Derivation Structure}

To explain a Boolean CQ, using a derivation structure
that works on CQs seems natural. However, a downside is that
we have to \enquote{collect} quantified variables along the proof
\blue{and label vertices with complex expressions.}
\blue{Since the inference rules apply on sub-expressions, it may be challenging to understand on which part of the CQ an inference is performed.}
The
problem is that we cannot separate inference steps on the same variable without
affecting soundness, as the existential quantification only applies locally in
the current CQ.
To follow our example: $x''$ and~$z'$ in
Figure~\ref{fig:ex-existential} are connected to each other and to the constant
$\ex{b}$, and thus have to be kept together: although $\exists x'',z'.P(\ex{b},x'')\land
S(x'',z')$ implies $\exists x''.P(\ex{b},x'')$ and $\exists x'',z'.S(x'',z')$, those two
CQs do not imply the original CQ anymore.
\blue{Another problem is that verifying the correctness of an \MOPO-step actually requires solving an \NP-hard problem from the user, \ie finding a matching substitution for a conjunction of atoms (unless we augment the proof by showing the substitutions as well).}
To overcome these issues, we
consider a second type of deriver that relies on Skolemization, and is inspired
by the approach from~\cite{DBLP:conf/otm/BorgidaCR08}.

%
%
This deriver, \Rsk, mainly operates on ground CQs, and requires the theory to
be \emph{Skolemized}. This means that it cannot contain existential quantification,
it may however contain function symbols. To Skolemize existential rules, for each
existentially quantified variable a fresh function symbol is introduced; for the
CI 
$\exists P^-\sqsubseteq \exists S$  this results in $P(x,y) \to S(y,g(y))$, where
$g$ is a unary function symbol whose argument denotes the dependency on
the variable~$y$ shared between the body and head of the rule.
Let $\Tmc^s$ be the set of Skolemized rules resulting from this transformation \blue{and note that the entailments $\Tmc\cup\Amc\models\q(\a)$ and $\Tmc^s\cup\Amc\models\q(\a)$ are equivalent for CQs $\q(\x)$ that do not use function symbols}.
Our deriver internally considers two kinds of formulas: 1) CQs that may
use function symbols and 2) rules of the form
$\forall\vec{x}.\phi(\vec{x})\rightarrow\psi(\vec{x})$, where $\psi(\vec{x})$
may now contain function terms, but no further quantified variables.
%
Since we are only interested in CQs that are entailed by~$\Tmc^s\cup\Amc$, we can assume \wlg that this entailment can be shown solely using domain elements denoted by ground terms, \eg $f(f(\ex{a}))$, which allows us to eliminate variables from most of the inferences. 
%
For example, instead of 
$\exists x'',z'.\, P(\ex{b},x'')\land S(x'',z')$ in Figure~\ref{fig:ex-existential}
we now use $P(\ex{b},f(\ex{b}))\land S(f(\ex{b}),g(f(\ex{b})))$. 
Since these atoms do not share variables, in our derivation structure we mainly need to consider inferences on single atoms, which allows for more fine-grained proofs (see Figure~\ref{fig:ex-skolem}).
Only at the end we need to compose atoms to obtain a CQ.

\begin{figure}
  \begin{framed}
    \centering
    \def\defaultHypSeparation{\hskip .35em}
    \AXC{$\alpha_1(\vec{t}_1)$}
    \AXC{$\dots$}
    \AXC{$\alpha_n(\vec{t}_n)$}
    \AXC{$\psi(\vec{y},\vec{z})\to\,\chi(\vec{z})$}
    \RL{\gMOPO}
    \QIC{$\beta(\vec{s})$}
    \DP
    \\
    \AXC{$\alpha_1(\vec{t}_1)$}
    \AXC{$\dots$}
    \AXC{$\alpha_n(\vec{t}_n)$}
    \RL{\gCONJ}
    \TIC{$\alpha_1(\vec{t}_1)\land\dots\land\alpha_n(\vec{t}_n)$}
    \DP
    \qquad
    \AXC{$\phi(\vec{t})$}
    \RL{\gEXISTS}
    \UIC{$\exists\vec{x}.\phi(\vec{x})$}
    \DP
  \end{framed}
  \vspace*{-\baselineskip}
  \caption{Inference schemas for \Rsk.}
  \label{fig:dsk}
\end{figure}
The simplified inference schemas are shown in Figure~\ref{fig:dsk}.
In~\gMOPO, $\alpha_i(\vec{t}_i)$ and $\beta(\vec{s})$ are ground atoms with terms
composed from individual names and Skolem functions, and likewise~$\chi(\vec{z})$ may 
contain Skolem functions; similar to \MOPO, we require that
there is a substitution~$\pi$ such that
$\pi(\psi(\vec{y},\vec{z}))=\{\alpha_1(\vec{t}_1),\dots,\alpha_n(\vec{t}_n)\}$
and $\beta(\vec{s})\in\pi(\chi(\vec{z}))$.
In~\gEXISTS, $\vec{t}$ is now a vector of ground terms which may contain function symbols.
Since $\gMOPO$ works only with ground atoms, \gCONJ and \gEXISTS can now
only be used at the end of a proof to obtain the desired CQ (see Figure~\ref{fig:ex-skolem}).
%
Moreover, we do not need a version of~\TAUT here since it would be trivial for ground atoms.
Its effects in \Rcq can be simulated here due to the fact that the same atom can be used several times as a premise for \gMOPO or \gCONJ.
%

\begin{definition}[Skolemized Deriver]\label{def:cq-ds-skolem}
  The derivation structure $\Rsk(\Tmc^s\cup\Amc,\q(\vec{a}))$ is defined similarly to Definition~\ref{def:cq-ds}, but using $\Tmc^s$ and the inference schemas~\gMOPO,~\gCONJ and~\gEXISTS.
\end{definition}

Though different presentations with different advantages and disadvantages, it
is not hard to translate proofs based on \Rsk into proofs in \Rcq and vice
versa.

\begin{restatable}{lemma}{LemTransformation}\label{lem:transformation}
Any proof~$\p$ in $\Rcq(\Tmc\cup\Amc,\q(\vec{a}))$ can be transformed into
a proof in $\Rsk(\Tmc^s\cup\Amc,\q(\vec{a}))$ of size polynomial in the
sizes of~$\p$ and~\Tmc, and conversely any proof
$\p$ in $\Rsk(\Tmc^s\cup\Amc,\q(\vec{a}))$ can be transformed into
a proof in $\Rcq(\Tmc\cup\Amc,\q(\vec{a}))$ of size polynomial in the
sizes of~$\p$ and~\Tmc. \blue{The latter also holds for tree proofs.}
\end{restatable}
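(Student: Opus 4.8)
The proof is a round-trip translation between the two derivation structures, carried out inference-by-inference. The natural strategy is to define, for each direction, a local rewriting of single inference steps, and then argue that composing these local rewritings over the whole proof yields a valid proof in the target structure whose size is controlled.

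\textbf{From $\Rcq$ to $\Rsk$.} The plan is to fix a proof $\p$ in $\Rcq(\Tmc\cup\Amc,\q(\vec a))$ and first observe that, since all CQs appearing in $\p$ are entailed from $\Tmc\cup\Amc$, we may fix for each vertex $v$ a ``witnessing'' homomorphism of its label $\exists\vec x.\phi(\vec x)$ into the (Skolem) chase of $\Tmc^s\cup\Amc$; this replaces each existential variable by a concrete ground term built from individual names and Skolem functions, turning every CQ label into a conjunction of ground atoms over Skolem terms. I would process the inferences of $\p$ in topological order and maintain the invariant that each ground atom occurring in the grounded version of a vertex label has been derived by a subproof in $\Rsk$. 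An \MOPO-step matching a rule body into $\phi(\vec x)$ then becomes: for each atom newly introduced in the conclusion, one \gMOPO-step whose premises are the (already-derived) ground atoms matched by the rule body, using the Skolemized version of that rule; atoms merely ``carried along'' need no new inference, their subproofs are reused. A \CONJ-step and the final goal CQ collapse into the closing \gCONJ and \gEXISTS steps. Since a rule body has only constantly (or $|\Tmc|$-)many atoms, each original inference spawns at most polynomially many $\Rsk$-inferences, and the reused subproofs are shared, so $\msize$ stays polynomial in $|\p|$ and $|\Tmc|$; the \TAUT-steps of $\p$ disappear entirely, as anticipated, because duplicating a ground atom is free (it can be used repeatedly as a premise).

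\textbf{From $\Rsk$ to $\Rcq$.} Here I would go the other way: given a proof $\p$ in $\Rsk(\Tmc^s\cup\Amc,\q(\vec a))$, read each ground atom $\alpha(\vec t)$ as a Boolean CQ by ``de-Skolemizing'' the terms — replacing maximal non-constant subterms $g(\vec s)$ bottom-up by fresh existential variables, so that $g(\vec s)$ and any other atom mentioning that term share the corresponding variable. A \gMOPO-step with a Skolemized rule $\psi(\vec y,\vec z)\to\chi(\vec z)$ is simulated by the un-Skolemized rule $\psi(\vec y,\vec z)\to\exists\vec u.\chi'(\vec z,\vec u)$ via an \MOPO-step, possibly preceded by a few \TAUT+\MOPO-steps to create the needed copies of variables (this is exactly the role of \TAUT noted in the text). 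The key bookkeeping point is connectivity: when the same Skolem term occurs in several atoms of $\p$, their de-Skolemized CQ-versions must be merged into one CQ via \CONJ before the rule is applied, so that the shared variable is genuinely shared; I would handle this by constructing, for each vertex of $\p$, the conjunction of the de-Skolemized forms of all atoms on which it ``depends through a common term,'' which is still only polynomially large. Finally \gCONJ and \gEXISTS map directly to \CONJ and \EXISTS. For tree proofs one has to additionally check that this construction does not force merging of incomparable branches — since a tree $\Rsk$-proof derives each atom by a distinct subtree, its de-Skolemization stays tree-shaped (up to the constantly many auxiliary \TAUT-inferences per step), which gives the parenthetical claim.

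\textbf{Main obstacle.} The routine parts are the soundness checks on the individual rewritten steps and the arithmetic showing polynomial blow-up. The delicate point — and the one I would spend the most care on — is the correspondence between \emph{shared existential variables} in $\Rcq$ and \emph{shared Skolem terms} in $\Rsk$: one must set up the grounding/de-Skolemization so that ``which atoms are glued together'' is preserved in both directions, and in particular that in the $\Rsk\to\Rcq$ direction one never needs to glue together atoms whose $\Rsk$-derivations were independent (otherwise the size, or for the last claim the tree shape, could blow up). Making the invariant precise enough to push through the induction, while keeping the auxiliary \TAUT-machinery under control, is the technical heart of the argument.
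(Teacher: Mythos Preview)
Your $\Rcq\to\Rsk$ direction is essentially the paper's argument; the paper simply organises it by first pushing all \EXISTS-steps to the end of~$\p$ (which is always possible and makes the grounding canonical: every existential variable is then introduced by an \MOPO-step and can be replaced by the corresponding Skolem term of the rule head), rather than by choosing chase homomorphisms per vertex. Either organisation works.

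The gap is in the $\Rsk\to\Rcq$ direction, and it is precisely the point you flag as the ``main obstacle'' but do not actually resolve. Your plan is to de-Skolemize each ground atom of~$\p$ separately into a Boolean CQ and then use \CONJ\ to bring together those whose underlying Skolem terms overlap. But \CONJ\ in~$\Rcq$ \emph{renames the quantified variables apart}: from $\exists x.\,P(a,x)$ and $\exists x,y.\,Q(x,y)$ you obtain $\exists x,x',y.\,P(a,x)\land Q(x',y)$, not $\exists x,y.\,P(a,x)\land Q(x,y)$. So the very sharing you want to enforce is destroyed by the only conjunction rule available. Your fallback, ``for each vertex, form the conjunction of the de-Skolemized atoms on which it depends through a common term,'' does not say how this conjunction is \emph{derived} in~$\Rcq$; deriving it is exactly the problem, and the transitive closure of the shares-a-term relation can pull in the whole proof anyway.

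The paper sidesteps all of this with a single idea you are missing: build a \emph{semi-linear} $\Rcq$-proof. Keep the Skolem terms (conceptually) until the very end; first collect all ABox assertions used in~$\p$ into one CQ via repeated \CONJ; then walk through the \gMOPO-steps of~$\p$ in topological order and, whenever all premises of a step already occur in the current big CQ, fire one \MOPO\ with the un-Skolemized rule, \emph{keeping} whatever atoms are still needed later. Because everything lives in a single CQ throughout, variables that stand for the same Skolem term are literally the same variable, and no merging-across-branches problem ever arises. Duplicated premises are handled by at most $|\Tmc|$ auxiliary \TAUT+\MOPO\ steps per inference. This construction is automatically a tree (a spine of CQs with side leaves for axioms), so the tree-proof claim is immediate rather than requiring the extra branch-compatibility argument you sketch.
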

\alisa{Rev2: "can the translation be performed in polynomial time?"}
\stefan{I would say yes. Should we state this explicitly?}
However, it is not the case that \emph{minimal} proofs are equivalent for these two derivers, \ie a CQ could have only large proofs in one of them but small proofs in the other, or vice versa.

This lemma also shows that our derivation structures are complete, \ie if $\Tmc\cup\Amc\models\q(\a)$ holds, then we can provide a proof for it.
To see this, consider the minimal Herbrand model~$H$ of $\Tmc^s\cup\Amc$, which can be computed using the \emph{(Skolem) chase} procedure for existential rules---essentially, applying the rules step-by-step to obtain new ground atoms, in a way very similar to~\gMOPO.
This model is a universal model for CQ answering over $\Tmc\cup\Amc$, which means that $\Tmc\cup\Amc\models\q(\a)$ implies $H\models\q(\a)$, which, in turn, means that there must be a proof in $\Rsk(\Tmc^s\cup\Amc,\q(\a))$, and hence by Lemma~\ref{lem:transformation} also one in $\Rcq(\Tmc\cup\Amc,\q(\a))$.
For convenience, we assume in the following that TBoxes are silently Skolemized
when constructing derivation structures using $\Rsk$, that is, we identify
$\Rsk(\Tmc\cup\Amc,\q(\vec{a}))$ with $\Rsk(\Tmc^s\cup\Amc,\q(\vec{a}))$.


\begin{figure}
\centering\begin{footnotesize}
\begin{tikzpicture}[scale=0.8,
block/.style={
draw,
rounded rectangle,
},
tbox/.style={
draw=gray, text=gray,
rounded rectangle,
},
he/.style={rounded corners=10pt,->}]
\path  (3,2) node[block] (x1) {$B(\ex{b})$}
		(9,2) node[tbox] (x2) {$B \sqsubseteq \exists P$} 
		(6,1) node[block] (x3) {$P(\ex{b},f(\ex{b}))$}
		(12,1) node[tbox] (y0) {$\exists P^-\sqsubseteq \exists S$} 
		(0,1) node[tbox] (y1) {$P\sqsubseteq R^-$} 
		(9,0) node[block] (y2) {$S(f(\ex{b}),g(f(\ex{b})))$}
		(3,0) node[block] (y5) {$R(f(\ex{b}),\ex{b})$}
		(-3.2,0) node[tbox] (y6){$\exists R^-\sqsubseteq \exists T$} 
        (-0.1,-1) node[block] (y7) {$T(\ex{b},h(\ex{b}))$}
      	%
	     %
	     (4.5,-3) node[block] (z3) { $R(f(\ex{b}),\ex{b}) \land T(\ex{b},h(\ex{b})) \land T(\ex{b},h(\ex{b})) \land R(f(\ex{b}),\ex{b}) \land S(f(\ex{b}),g(f(\ex{b})))\land S(f(\ex{b}),g(f(\ex{b}))) \land P(\ex{b},f(\ex{b}))$}
	  (4.5,-4.2) node[block,fill=gray!20] (z) { $\exists x,x', x'', y, y',z, z'.\, R(x, y) \,\land\, T (y, z) \,\land\, T (y', z) \,\land\, R(x', y') \,\land\,  S(x', z') \,\land\, S(x'', z') \,\land\, P(\ex{b},x'')$};
\draw[he] (x1) -- ($(x1)!0.5!(x2)$) node[yshift=3pt] {\tiny{\gMOPO}} -- (x3);
\draw[he] (x2) -- ($(x1)!0.5!(x2)$) -- (x3);
%
\draw[he] (y0) -- ($(y0)!0.5!(x3)$) node[yshift=3pt] {\tiny{\gMOPO}} -- (y2);
\draw[he] (x3) -- ($(y0)!0.5!(x3)$) -- (y2);
%
\draw[he] (y1) -- ($(y1)!0.5!(x3)$) node[yshift=3pt] {\tiny{\gMOPO}} -- (y5);
\draw[he] (x3) -- ($(y1)!0.5!(x3)$) -- (y5);
\draw[he] (y5) -- ($(y5)!0.5!(y6)$) node[yshift=3pt] {\tiny{\gMOPO}} -- (y7);
\draw[he] (y6) -- ($(y5)!0.5!(y6)$) -- (y7);
\draw[he] (x3) -- ($(z3)+(0,1)$) node[xshift=8pt,yshift=-8pt] {\tiny{\gCONJ}}  --(z3);
\draw[he] (y2)-- ($(z3)+(0,1)$) --(z3);
\draw[he] ($(y2)+(0.3,-0.33)$)-- ($(z3)+(0,1)$) --(z3);
\draw[he] ($(y5)-(0.3,0.34)$)-- ($(z3)+(0,1)$) --(z3);
\draw[he] ($(y7)-(0.3,0.33)$)-- ($(z3)+(0,1)$) --(z3);
\draw[he] (y5)-- ($(z3)+(0,1)$) --(z3);
\draw[he] (y7)-- ($(z3)+(0,1)$) --(z3);
\draw[he] (z3) --(z) node [xshift=8pt,yshift=14pt] {\tiny{\gEXISTS}};
\end{tikzpicture}
\end{footnotesize}
\caption{A Skolemized proof for Example~\ref{ex}}\label{fig:ex-skolem}
\end{figure}

\hide{
\section{Example}
Recall what is a tree witness~\cite{DBLP:conf/rweb/KontchakovZ14}. Let $\Omc$ be a TBox and an RBox in normal form and $\q(x)$ a CQ. Consider a pair $\tw = (\tw_r, \tw_i)$ of disjoint sets of terms in $\q(x)$, where $\tw_i$ (interior) is non-empty and contains only existentially quantified variables and $\tw_r$ (roots), on the other hand, can be empty and can contain answer variables and individual names. Set
$\q_\tw=\{S(\mathbf{z}) \in \q \; | \; \mathbf{z}\subseteq \tw_i \cup \tw_r \; \text{ and }\mathbf{z} \not\subseteq \tw_r\}$
We say that $\tw$ is a tree witness for $\q(x)$ and $\Tmc$ generated by $\exists R$ if the following two conditions are satisfied:
\begin{itemize}
\item there exists a homomorphism $h$ from $\q_\tw$ to the canonical model $C^{\exists R}_\Tmc(a)$, for some $a$, such that $\tw_r = \{z | h(z) = a \}$ and $\tw_i$ contains the remaining variables in $\q_\tw$,
\item $\q_\tw$ is a minimal subset of $\q$ such that, for any $y \in \tw_i$, every atom in $\q$ containing $y$ belongs to $\q_\tw$.
\end{itemize}
Tree witnesses $\tw$ and $\tw'$ are said to be conflicting if $\q_\tw \cap \q_{\tw'} \neq \emptyset $ (in other words, the interior of one tree witness contains a root or an interior variable of the other, which makes it impossible to have both tree witnesses mapped into the anonymous part of $C_{\Omc,\Amc}$ at the same time). A set of tree witnesses is said to be independent if any two distinct tree witnesses in the set are non-conflicting. 

Consider a more complex example.

\begin{example}[\cf Example 35 in~\cite{DBLP:conf/rweb/KontchakovZ14}]
There are four tree witnesses for $\q(x, y'')$ and $\Omc$:
\begin{itemize}
\item $\tw^1=(\tw^1_r, \tw^1_i)$ generated by $\exists T$ with $\tw^1_r = \{y, y'\}$ and $\tw^1_i = \{z\}$ and $\q_\tw^1 = \{ T (y, z), T (y', z) \}$;
\item $\tw^2 = (\tw^2_r, \tw^2_i )$ generated by $\exists S$ with $\tw^2_r = \{x', x''\}$ and $\tw^2_i = \{z'\}$ and $\q_\tw^2 = \{ S(x', z'), S(x'', z') \}$;
\item $\tw^3 = (t^3_r , t^3_i )$ generated by $\exists R$ with $\tw^3_r = \{x, x'\}$ and $\tw^3_i = \{y, y', z\}$ and \\
$\q_\tw^3 = \{ R(x, y), T (y, z), T (y', z), R(x', y') \}$;
\item $\tw^4 = (t^4_r , t^4_i )$ generated by $\exists P$ with $\tw^4_r = \{y, y',y''\}$ and $\tw^4_i = \{x, x', z,z',x''\}$ and \\
$\q_\tw^4 = \{ R(x, y), T (y, z), T (y', z), R(x', y'), S(x',z'), S(x'',z'), P(y'',x'') \}$.

\end{itemize}
Clearly, $\tw^4$ is conflicting with the rest; $\tw^3$ is conflicting with $\tw^1$ but not with $\tw^2$ despite sharing a common root. Thus, we have the following independent sets of tree witnesses: $\emptyset$, $\{\tw^1\}$, $\{\tw^2\}$, $\{\tw^3\}$, $\{\tw^4\}$, $\{\tw^1,\tw^2\}$, $\{\tw^2,\tw^3\}$.

Let us assume we know that which independent set correspond to a positively evaluated conjunct in the tree witness rewriting. For example, $\{\tw^4\}$. On its basis, we can build a proof over the ABox $\{B(\ex{b})\}$ as following. 

We have $\tw^4_r$ mapped to an ABox individual and  $\tw^4_i$ mapped either to Skolem terms (see Figure~\ref{fig:ex-skolem}) or to existential quantified variables (Figure~\ref{fig:ex-existential}).

\end{example} 
}

\section{The Complexity of Finding Good Proofs}\label{sec:results}

\blue{It is our intution that proofs in \Rsk are more comprehensible than using \Rcq because they are more modular. Additionally, we assume that \emph{small} proofs (\wrt size \msize or tree size \mtree) are more comprehensible than larger ones (but one can certainly also consider other measures~\cite{ABB+-DL20,ABB+-CADE21}). Therefore, we now study the \emph{complexity} of finding small proofs automatically (which is independent of the comprehensibility of the resulting proofs).}
More precisely, we are interested in the following decision problem \blue{$\OPx(\Lmc,\m)$ for a deriver $\Rx\in\{\Rcq,\Rsk\}$, a DL $\Lmc$ (\eg $\DLLiteR$), and a measure $\m\in\{\msize,\mtree\}$}:
given an \Lmc-KB $\Tmc\cup\Amc$, a query~$\q(\vec{x})$ with certain answer~$\vec{a}$,
and a natural number~$n$ \blue{(in binary encoding)}, is there a proof~\p for~$\q(\vec{a})$ in
$\Rx(\Tmc\cup\Amc,\q(\vec{a}))$ with $\m(\p)\le n$?
%
%
To better distinguish the complexity of finding small proofs from that of
query answering, we assume $\Tmc\cup\Amc\models\q(\a)$ as \blue{prerequisite},
which fits the intuition that users request an explanation only after
they know that $\a$ is a certain answer.
Lemma~7 in~\cite[]{ABB+-CADE21} shows that, instead of looking for arbitrary
proofs and
homomorphisms into the derivation structure, one can restrict the search to
subproofs of $\Rx(\Tmc\cup\Amc,\q(\vec{a}))$, which we will often do
implicitly.

It is common in the context of OMQA to distinguish between \emph{data complexity}, where only 
the data varies, and
\emph{combined complexity}, where also the influence of the other inputs is taken into account.
This raises the question whether the bound~$n$ is
seen as part of the \blue{input} or not. It turns out that fixing~$n$
trivializes the data complexity,
\blue{which can be shown by observing that there are only constantly many ABoxes (up to isomorphism) that can entail a given CQ.}

\begin{restatable}{theorem}{TheACzero}\label{th:TheACzero}
  \blue{For a constant bound~$n$, $\OPx(\Lmc,\m)$ is in \ACzero in data complexity.}
\end{restatable}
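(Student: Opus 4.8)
The plan is to exploit the fact that a fixed bound~$n$ caps the number of ABox assertions that any relevant proof can use, and then to describe the set of yes\nobreakdash-instances by a single first-order formula evaluated over~$\Amc$, which puts the problem in~\ACzero via the classical characterisation of (uniform) \ACzero as \FO{} model checking over finite structures with built\nobreakdash-in arithmetic. Throughout, $\Tmc$, $\q$, $\a$ and~$n$ are fixed, and only~$\Amc$ varies.

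First I would observe that every proof~$\p$ with $\m(\p)\le n$ has at most~$n$ vertices (for $\m=\mtree$ because $\msize(\p)\le\mtree(\p)$), hence at most~$n$ of its leaves carry ABox assertions, forming a set $\Amc'\subseteq\Amc$ with $\card{\Amc'}\le n$. Taking~$\p$ minimal, it refers to no concept or role name outside $\Tmc$, $\q$ and~$\Amc'$, so it is already an admissible proof in $\Rx(\Tmc\cup\Amc',\q(\a))$: it is grounded over $\Tmc\cup\Amc'$, sound, and still maps into the correspondingly smaller derivation structure. Conversely, since enlarging the ABox only adds vertices and inference instances, any admissible proof over $\Tmc\cup\Amc''$ with $\Amc''\subseteq\Amc$ is also one over $\Tmc\cup\Amc$. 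Hence $\Rx(\Tmc\cup\Amc,\q(\a))$ admits a proof of measure~$\le n$ if and only if some $\Amc'\subseteq\Amc$ with $\card{\Amc'}\le n$ does; note that by soundness such an~$\Amc'$ already entails $\q(\a)$ together with~$\Tmc$, which is the phenomenon that only constantly many ABoxes, up to isomorphism, can witness a bounded proof.

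It then remains to see that membership of an instance in $\OPx(\Lmc,\m)$ is \FO{}-definable over~$\Amc$. Whether a concrete finite KB $\Tmc\cup\Amc'$ with $\card{\Amc'}\le n$ admits a proof of measure~$\le n$ depends only on the isomorphism type of~$\Amc'$ together with the pattern marking which of its (at most~$2n$) individuals coincide with constants of~$\q$ or with components of~$\a$; there are only constantly many such types, and for each one we precompute a single bit by an exhaustive search. That search is finite because, restricting to subproofs as in Lemma~7 of~\cite{ABB+-CADE21}, each modus\nobreakdash-ponens step (\MOPO or \gMOPO) can enlarge a CQ only by head atoms of a (Skolemized) TBox rule, each conjunction step (\CONJ or \gCONJ) only by conjoining an already\nobreakdash-present CQ, and the body of any rule introduced by~\TAUT must match a CQ already present; so the sentences labelling the vertices of a minimal bounded proof range over a computable finite set, of which only finitely many assignments to $\le n$ vertices exist. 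Collecting the types whose bit is~$1$ yields a fixed finite set of \emph{good} types, and ``$\Amc$ embeds some good type in a way that sends its designated individuals to~$\a$ and to the constants of~$\q$'' is a fixed existential \FO{} formula~$\varphi(\x)$; the instance is a yes\nobreakdash-instance exactly when $\Amc\models\varphi(\a)$, and \FO{} model checking over finite structures is in~\ACzero. The step I expect to be the real obstacle is precisely this finiteness/decidability of the per\nobreakdash-type check, i.e.\ bounding the syntactic size of the formulas that can occur in a minimal bounded proof (the conjunction rule is the source of the only blow\nobreakdash-up, but it is exponential in the \emph{constant}~$n$), whereas the reduction to small sub\nobreakdash-ABoxes and the \FO{} encoding are routine.
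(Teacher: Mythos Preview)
Your proposal is correct and follows essentially the same route as the paper's proof. Both arguments rest on the observation that a proof of measure at most the constant~$n$ uses at most~$n$ ABox assertions and hence only constantly many individual names, so that up to renaming of the non-distinguished individuals there are only finitely many candidate sub-ABoxes; both then precompute which of these admit a small proof and reduce the question to evaluating a fixed first-order query over~$\Amc$. The paper packages this last step slightly more directly by turning each good sub-ABox into a CQ (existentially quantifying the fresh names) and invoking \ACzero data complexity of UCQ evaluation, whereas you go through isomorphism types and generic \FO{} model checking; these are equivalent formulations. Your explicit treatment of the finiteness of the per-type search (bounding the label sizes via the at most exponential-in-$n$ blow-up from \CONJ and hence from \TAUT) is actually more careful than the paper, which simply appeals to ``breadth-first search for proofs up to size~$n$'' without spelling out why the search space is finite.
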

One may argue that, since the size of the proof depends on \Amc, the bound
$n$ on the proof size should be considered part of the data as well. Under this
assumption, our decision problem is not necessarily in \ACzero anymore. For
example, consider the \EL TBox $\{ \exists r.A\sqsubseteq A \}$ and
$q(x)\leftarrow A(x)$. For every $n$, there is an ABox
$\Amc$ such that $A(a)$ is entailed by a seqeuence of~$n$ role assertions, and thus needs a proof of
size at least~$n$. Deciding whether this query admits a bounded proof is thus
as hard as deciding whether it admits an answer at all in $\Amc$, \ie
$\PTime$-hard~\cite{DBLP:conf/dlog/Rosati07}. However, we at
least stay in $\ACzero$ for DLs over which CQs are rewritable, \eg \DLLiteR~\cite{JAR-2007}, \blue{because the number of (non-isomorphic) proofs that we need to consider is bounded by the size of the rewriting, which is constant in data complexity.}

\begin{restatable}{theorem}{TheACzeroRewritable}\label{th:TheACzeroRewritable}
  \blue{If all CQs are UCQ-rewritable over \Lmc-TBoxes, then $\OPx(\Lmc,\m)$ is in \ACzero in data complexity.}
\end{restatable}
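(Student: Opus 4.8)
The plan is to establish that, when CQs are UCQ-rewritable over $\Lmc$-TBoxes, every entailed answer $\Tmc\cup\Amc\models\q(\a)$ has a proof whose $\m$-value is at most a constant $M=M(\Lmc,\Tmc,\q,\m)$ not depending on $\Amc$, where $\m\in\{\msize,\mtree\}$ is the fixed measure. Granting this, the bound $n$ is essentially irrelevant: a proof with $\m$-value at most $n$ exists iff one with $\m$-value at most $\min(n,M)$ does, and the latter is a bounded case distinction over constant thresholds, each decidable in \ACzero by Theorem~\ref{th:TheACzero}.

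First I would fix a UCQ-rewriting $\UCQ=\bigvee_{i=1}^{k}\q_i$ of $\q$ over $\Lmc$-TBoxes; since $\Lmc$, $\Tmc$ and $\q$ are fixed in data complexity, $\UCQ$ is a constant, and $\Tmc\cup\Amc\models\q(\a)$ holds iff some $\q_i$ admits a homomorphism $h$ into the atoms of $\Amc$ with $h(\x)=\a$. For each $i$, let $D_i$ be the database obtained by freezing the variables of $\q_i$ into fresh individual names, with frozen answer tuple $\vec a_i$. By soundness of the rewriting, $\Tmc\cup D_i\models\q(\vec a_i)$, and by completeness of the derivation structures (the discussion after Lemma~\ref{lem:transformation}, via the Skolem chase) there is a \emph{finite} proof $P_i$ of $\Tmc\cup D_i\models\q(\vec a_i)$ in $\Rx(\Tmc\cup D_i,\q(\vec a_i))$. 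Since $D_i$ and $\vec a_i$ are fixed objects, so is each $P_i$; set $M:=\max_i\m(P_i)$, a constant.

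The heart of the proof is transporting $P_i$ along $h$. Extend $h$ to a substitution $\hat h$ of individual names by sending each frozen constant $\hat v$ of $D_i$ to $h(v)$, and relabel $P_i$ by applying $\hat h$ to every vertex label. The resulting labeled hypergraph $\hat h(P_i)$ has the same underlying hypergraph structure as $P_i$ (only the labels differ), hence $\m(\hat h(P_i))=\m(P_i)\le M$, and I claim it is an admissible proof in $\Rx(\Tmc\cup\Amc,\q(\a))$: its leaves are $\Tmc$-axioms (which $\hat h$ leaves untouched) or assertions of $\Amc$ (since $h$ maps $\q_i$ into $\Amc$); its unique sink is now labeled $\q(\a)$; acyclicity and the single-incoming-edge property are structural and preserved; and every inference stays a valid instance of its schema, because all schemas of $\Rcq$ and $\Rsk$ (\MOPO/\gMOPO, \CONJ/\gCONJ, \TAUT, \EXISTS/\gEXISTS) are closed under substitution of individual names — for \MOPO/\gMOPO one replaces the matching substitution $\pi$ by $\hat h\circ\pi$, and the remaining schemas are immediate. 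The map $\hat h$ need not be injective, so two vertices may end up with the same label; this is harmless, and the label-identity map is the required homomorphism into the derivation structure. Consequently, whenever $\Tmc\cup\Amc\models\q(\a)$, there is a proof of $\m$-value at most $M$.

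Finally I would assemble the \ACzero argument. Using the prerequisite $\Tmc\cup\Amc\models\q(\a)$, the instance is accepted iff $\bigvee_{j=0}^{M}\bigl(\min(n,M)=j \ \wedge\ \text{there is a proof }\p\text{ in }\Rx(\Tmc\cup\Amc,\q(\a))\text{ with }\m(\p)\le j\bigr)$. For each fixed $j\le M$, the second conjunct is exactly $\OPx(\Lmc,\m)$ with the constant bound $j$, which is in \ACzero in data complexity by Theorem~\ref{th:TheACzero}; the comparison $\min(n,M)=j$ between the binary-encoded $n$ and the constants $j,M$ is computed by a constant-size circuit; and a fixed Boolean combination of \ACzero predicates is in \ACzero. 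Without the prerequisite one additionally disjoins ``$n\ge M$ and $\Amc\models\UCQ(\a)$'', which is in \ACzero by UCQ-rewritability. I expect the transport step to be the main obstacle: one must check with some care that each inference schema genuinely survives an arbitrary, possibly non-injective renaming of individuals, and that the relabeled hypergraph is still a proof with a homomorphism into the derivation structure; the rest is bookkeeping plus the appeal to Theorem~\ref{th:TheACzero}.
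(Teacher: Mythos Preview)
Your argument is correct and follows the same overall strategy as the paper: take a UCQ rewriting, freeze each disjunct into a canonical ABox, and use these finitely many canonical instances to control proof sizes independently of the actual data. The difference is in how you finish. The paper computes, for each disjunct $q'$, the \emph{minimal} proof size $n_{q'}$ over the frozen ABox $\Amc_{q'}$, then builds for each threshold $n$ the sub-UCQ $Q_n=\bigvee\{q'\mid n_{q'}\le n\}$ and reduces the decision to $\Amc\models Q_{\min(n,n_{\max})}(\a)$; this needs the (implicit) claim that the minimal proof size over $\Amc$ is exactly $\min\{n_{q'}\mid \Amc\models q'(\a)\}$, i.e.\ transport in both directions. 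You instead establish only an \emph{upper bound} $M$ on the minimal proof size via a single forward transport along the matching homomorphism, and then invoke Theorem~\ref{th:TheACzero} as a black box for each constant threshold $j\le M$. This is a bit more modular and sidesteps the need to argue the reverse inequality; the price is that you rely on Theorem~\ref{th:TheACzero}, whereas the paper's route is self-contained and yields an explicit UCQ. Your care with the transport step (closure of the inference schemas under renaming of individuals, possibly non-injective) is exactly the point the paper compresses into the phrase ``up to isomorphism,'' so it is good that you spell it out.
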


We now consider the combined complexity. In~\cite{LPAR23:Finding_Small_Proofs_for,ABB+-CADE21}, we
established general upper bounds for finding proofs of bounded size.
These results depend only on 
the size
of the derivation structure obtained 
for the given input.
Both $\Rcq$
and $\Rsk$ may produce derivation structures of \blue{infinite} size, as $\Rcq$ contains CQs
of arbitrary size, and $\Rsk$ also has Skolem
terms of arbitrary nesting depth.
However, we can sometimes bound the number
of relevant Skolem terms in~\Rsk by considering only
the part of the minimal Herbrand model~$H$ that is necessary to satisfy the query~$\q(\a)$.
For example, in logics with the \emph{polynomial witness 
property}~\cite{DBLP:journals/ai/GottlobKKPSZ14}, including \DLLiteR, we know that any 
query that is entailed is already satisfied after polynomially many chase steps used to 
construct~$H$.
In particular, this means that the nesting depth of Skolem terms in a proof is bounded polynomially (in the size of the TBox and the query), and hence the part of $\Rsk(\Tmc^s\cup\Amc,\q(\a))$ that we need to search for a (small) proof is bounded exponentially.
%
%
%
For such structures, our results from~\cite{LPAR23:Finding_Small_Proofs_for,ABB+-CADE21} give us
a \NExpTime-upper bound for size, and a \PSpace-upper bound for tree size, upon
which we can improve with the following lemma.

\begin{restatable}{lemma}{LemProofSize}\label{lem:proof-size}
There is a polynomial~$p$ such that for any \DLLiteR KB $\Tmc\cup\Amc$,
CQ $\q(\vec{x})$, and certain answer~$\vec{a}$, there is a proof
in $\Rsk(\Tmc\cup\Amc,\q(\vec{a}))$
of tree size at most $p(\lvert\Tmc\rvert,\lvert
\q(\vec{x})\rvert)$.
\end{restatable}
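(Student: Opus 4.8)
The plan is to explicitly construct, for every \DLLiteR KB $\Tmc\cup\Amc$, CQ $\q(\vec{x})$, and certain answer $\vec{a}$, a tree proof in $\Rsk(\Tmc\cup\Amc,\q(\vec{a}))$ whose tree size is bounded by a fixed polynomial in $|\Tmc|$ and $|\q(\vec{x})|$, crucially \emph{not} depending on $\Amc$. I would work with the minimal Herbrand model $H$ of the Skolemized KB, which is a universal model for CQ answering, so from $\Tmc\cup\Amc\models\q(\vec{a})$ we obtain a substitution $\sigma$ mapping the answer variables of $\q$ to $\vec{a}$ and its existential variables $\vec{y}$ to ground terms $\vec{t}$ over $\Amc$ and the Skolem functions, with $\sigma(\phi)\subseteq H$ where $\q=\exists\vec{y}.\phi$. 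By the polynomial witness property of \DLLiteR (used already in the text to bound the nesting depth of Skolem terms), $\sigma$ can be chosen so that every atom of $\sigma(\phi)$ is produced within a number $N$ of chase steps that is polynomial in $|\Tmc|$ and $|\q(\vec{x})|$; in particular the Skolem-nesting depth of every term in $\vec{t}$ is at most $N$.

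The proof then has the shape of Figure~\ref{fig:ex-skolem}: for each atom $\alpha\in\phi$ a sub-derivation of the ground atom $\sigma(\alpha)\in H$, then a single \gCONJ step collecting these (at most $|\q(\vec{x})|$) ground atoms into the conjunction $\sigma(\phi)$, then a single \gEXISTS step that replaces, in the appropriate positions, the ground terms $\vec{t}$ by the variables $\vec{y}$, yielding $\q(\vec{a})$. It remains to bound the size of a sub-derivation of a single ground atom $\beta\in H$. Here I would exploit the specific shape of the Skolemized \DLLiteR rules: every such rule has a \emph{single-atom body}, so each chase step derives a new ground atom from exactly one previously derived ground atom together with one rule. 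Hence the chase history of $\beta$ is a chain $\alpha_0,\alpha_1,\dots,\alpha_\ell=\beta$ with $\alpha_0\in\Amc$, and the corresponding \Rsk-derivation is a caterpillar: $\alpha_0$ is a leaf, each $\alpha_{i+1}$ is obtained by one \gMOPO step from $\alpha_i$ and a rule leaf, and no vertex is reused inside the chain, so its tree size is $2\ell+1$. Assembling the at most $|\q(\vec{x})|$ such caterpillars under one \gCONJ and one \gEXISTS (duplicating shared atoms in the tree unraveling, which costs only a factor $|\q(\vec{x})|$) gives a tree proof of size $O(|\q(\vec{x})|\cdot\ell)+2$.

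The key remaining point, which I expect to be the main obstacle, is showing that $\ell$ can be taken polynomial in $|\Tmc|$ and $N$, and thus independent of $|\Amc|$. I would argue this by first making the chain minimal so that no ground atom repeats in it, and then using structural properties of the \DLLiteR Skolem chase: its rules never merge two distinct ABox individuals into one atom, the anonymous part of $H$ is a forest, and the type of each anonymous element is determined by the role that generated it. Consequently, all atoms of a minimal chain deriving $\beta$ live over the at most two ABox individuals anchoring $\beta$ together with Skolem terms along the at most $N$-deep ancestor paths of the terms occurring in $\beta$ (plus, for inclusions of the form $\exists R\sqsubseteq A$, at most one step "downward" in the forest, which I would need to handle explicitly). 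Only $O(N)$ distinct terms are then relevant, and over each such term the bounded-height concept and role inclusion hierarchies of \Tmc admit only $O(|\Tmc|)$ distinct atoms, so a repetition-free chain has length $\ell=O(N\cdot|\Tmc|)$. Since $N$ is polynomial in $|\Tmc|$ and $|\q(\vec{x})|$, the resulting tree proof has size polynomial in $|\Tmc|$ and $|\q(\vec{x})|$, as claimed. Making this neighbourhood analysis of the \DLLiteR chase precise — so that a large ABox can never force a long minimal derivation of an individual atom — is the technical heart of the argument; the remaining assembly into an \Rsk-proof is routine.
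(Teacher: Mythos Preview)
Your proposal is correct and follows essentially the same approach as the paper: build one linear \gMOPO-chain per query atom, close with a single \gCONJ and \gEXISTS, and bound each chain's length via the polynomial witness property together with elimination of repeated labels. The paper's proof is much terser---it asserts in one line that the polynomial Skolem-depth bound forces only polynomially many distinct labels along any path and then removes repetitions---so the ``neighbourhood analysis'' you flag as the main obstacle is precisely the step the paper leaves implicit; your more careful sketch of it is welcome but not a different route.
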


A direct consequence of Lemmas~\ref{lem:transformation}
and~\ref{lem:proof-size} is the upper bound in the following theorem.
\blue{The lower bound can be shown by a reduction from query answering over \DLLiteR ontologies; since we assume $\Tmc\cup\Amc\models\q(\a)$ as a prerequisite for our decision probem, the only difficulty here is to construct an artificial proof for this entailment that ensures that $\Tmc'\cup\Amc'\models\q(\a)$ holds (for some extension $\Tmc'\cup\Amc'$ of $\Tmc\cup\Amc$), but such that the artificial proof is larger than any \enquote{natural} proof using only $\Tmc\cup\Amc$.}
\stefan{too detailed?}

\begin{restatable}{theorem}{TheNPDLLite}\label{th:TheNPDLLite}
  $\OPx(\DLLiteR,\m)$ is \NP-complete.
\end{restatable}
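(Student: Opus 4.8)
The plan is to obtain $\NP$-membership almost for free from Lemmas~\ref{lem:transformation} and~\ref{lem:proof-size}, and $\NP$-hardness by a reduction from conjunctive query entailment over $\DLLiteR$. For membership, I would first record that there is a fixed polynomial~$p$ such that, whenever $\Tmc\cup\Amc\models\q(\a)$, each derivation structure $\Rx(\Tmc\cup\Amc,\q(\a))$ with $\Rx\in\{\Rcq,\Rsk\}$ contains a proof~$\p$ with $\m(\p)\le p(\lvert\Tmc\rvert,\lvert\q(\x)\rvert)$ for both $\m\in\{\msize,\mtree\}$: the case of $\Rsk$ and $\mtree$ is Lemma~\ref{lem:proof-size}, and the remaining cases follow from it via Lemma~\ref{lem:transformation} (including its statement for tree proofs), together with the fact that for trimmed proofs $\msize(\p)\le\mtree(\p)$ and a DAG proof of tree size~$k$ has at most~$k$ vertices. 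An $\NP$-procedure then guesses a proof of size at most $\min\{n,\,p(\lvert\Tmc\rvert,\lvert\q(\x)\rvert)\}$ together with, for each inference step, a candidate substitution witnessing an instance of one of the schemas in Figure~\ref{fig:dcq} (resp.\ Figure~\ref{fig:dsk}), and checks in polynomial time that this is an admissible proof of $\Tmc\cup\Amc\models\q(\a)$ with $\m(\p)\le n$; by Lemma~7 of~\cite{ABB+-CADE21} it suffices to search for subproofs, so the only nontrivial part of the check is inference legality, which is exactly where the guessed substitutions are consumed. This size bound is sufficient: if $n\ge p(\lvert\Tmc\rvert,\lvert\q(\x)\rvert)$, the prerequisite $\Tmc\cup\Amc\models\q(\a)$ already guarantees a suitable proof of measure $\le n$ and size $\le p$; if $n<p(\lvert\Tmc\rvert,\lvert\q(\x)\rvert)$, every proof with $\m(\p)\le n$ has at most~$n$ vertices. (Alternatively, one simply plugs the polynomial bound of Lemma~\ref{lem:proof-size} into the generic search-based upper bounds of~\cite{LPAR23:Finding_Small_Proofs_for,ABB+-CADE21}.)

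For hardness, I would reduce from the decision problem ``$\Tmc\cup\Amc\models\q(\a)$?'' for $\DLLiteR$, which is $\NP$-complete in combined complexity~\cite{JAR-2007}, using hard instances in which~$\q$ is Boolean, connected, and whose relevant homomorphic images are tree-shaped, so that~$\q$ can be realized in the (necessarily tree-shaped) anonymous part of a canonical model. Given such an instance $(\Tmc,\Amc,\q)$, set $N:=p(\lvert\Tmc\rvert,\lvert\q(\x)\rvert)+1$ with~$p$ as above, pick fresh symbols $c,A_0,\dots,A_N$, and define $\Amc':=\Amc\cup\{A_0(c)\}$ and $\Tmc':=\Tmc\cup G$, where the gadget~$G$ consists of the chain $A_0\sqsubseteq A_1,\ \dots,\ A_{N-1}\sqsubseteq A_N$ together with $O(\lvert\q(\x)\rvert)$ further $\DLLiteR$ axioms that, starting from $A_N(c)$, build below~$c$ a tree-shaped homomorphic image of~$\q$ (these use fresh role names internally, but mention the predicates of~$\q$ at the final level of the construction). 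By design $\Tmc'\cup\Amc'\models\q$ holds unconditionally, so the produced instance satisfies the prerequisite of $\OPx(\DLLiteR,\m)$. Finally set $n:=p(\lvert\Tmc\rvert,\lvert\q(\x)\rvert)$ in binary; the whole reduction is clearly polynomial and uniform in $\Rx\in\{\Rcq,\Rsk\}$ and $\m\in\{\msize,\mtree\}$.

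Correctness of the reduction amounts to showing the constructed instance is positive iff $\Tmc\cup\Amc\models\q$: any proof of~$\q$ that uses an axiom of~$G$ must first derive $A_N(c)$, which needs $\Theta(N)$ inference steps and hence has measure $>n$; and any proof that uses no axiom of~$G$ mentions no fresh predicate, since atoms over the fresh predicates can never contribute to the fresh-predicate-free goal~$\q$ (only the ``query-builder'' axioms of~$G$ turn a fresh atom into an original one, and they require $A_N$), so after a standard pruning step it is a proof over $\Tmc\cup\Amc$, which exists --- and, by Lemma~\ref{lem:proof-size}, then has a variant of measure $\le n$ --- precisely when $\Tmc\cup\Amc\models\q$. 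The hard part of this argument, and the main obstacle of the whole proof, is the gadget~$G$: it must simultaneously (i) genuinely re-derive~$\q$ so the prerequisite holds, (ii) make it impossible to reach~$\q$ through~$G$ without traversing the whole chain, so every ``artificial'' proof is strictly longer than the polynomial bound on natural ones, and (iii) not interact with the original~$\Tmc$ to yield a short proof when $\Tmc\cup\Amc\not\models\q$. Reconciling these is what forces the reduction to work with (still $\NP$-hard) tree-shaped query instances and to let~$G$ share the original signature only at its last step; making the pruning argument and the analysis of the $G$--$\Tmc$ interaction watertight is the delicate part, whereas membership is essentially bookkeeping on top of Lemmas~\ref{lem:transformation} and~\ref{lem:proof-size}.
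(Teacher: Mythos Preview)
Your upper-bound argument is essentially the paper's: a polynomial bound on (tree) proof size from Lemma~\ref{lem:proof-size}, transferred to the other deriver and measure via Lemma~\ref{lem:transformation}, followed by guess-and-check. For the lower bound, the overall strategy also matches the paper---reduce from CQ entailment over $\DLLiteR$ and attach a gadget that guarantees the prerequisite $\Tmc'\cup\Amc'\models\q(\a)$ while making every ``artificial'' proof longer than the polynomial bound for a ``natural'' one---but your concrete gadget has a real gap.

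You propose to realise a homomorphic image of~$\q$ in the anonymous part of the canonical model below a fresh individual~$c$, using $\DLLiteR$ TBox axioms starting from $A_N(c)$. Since that anonymous part is necessarily tree-shaped, this forces you to restrict to queries~$\q$ that admit a tree-shaped homomorphic image. You then assert that CQ entailment over $\DLLiteR$ is ``still $\NP$-hard'' under this restriction, but that is exactly what would need to be argued: the standard hardness reductions (\eg from graph $3$-colourability) produce queries whose cores contain odd cycles and therefore have no homomorphism into any tree, so they are excluded by your restriction. Moreover, the additional axioms in~$G$ that ``mention the predicates of~$\q$ at the final level'' must introduce role atoms over the original role names, which \emph{do} interact with the original~$\Tmc$ (\eg via domain/range axioms or role inclusions), so your non-interaction claim~(iii) is also not automatic. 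You correctly identify all of this as the main obstacle, but you do not resolve it.

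The paper avoids both issues with a much simpler gadget that places the image of~$\q$ in the \emph{ABox} rather than in the anonymous part: for every predicate~$P$ occurring in~$\q$, introduce fresh predicates $P_0,\dots,P_n$ together with the chain $P_i\sqsubseteq P_{i+1}$ and $P_n\sqsubseteq P$, and let the extra ABox~$\Amc_0$ be obtained from the atoms of~$\q(\a)$ by replacing each quantified variable by a fresh individual and each predicate~$P$ by~$P_0$. This works for \emph{arbitrary} CQs (the ABox may have any shape), trivially yields $\Tmc_0\cup\Amc_0\models\q(\a)$, forces every proof through at least $n{+}1$ chain steps, and, because $\Tmc_0$ only relates the fresh predicates to the original ones (never the other way round), cannot shortcut through~$\Tmc$. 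Replacing your TBox-side gadget with this ABox-side chain makes the reduction go through without any restriction on~$\q$.
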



%

To obtain tractability, we can restrict the shape of the
query. Recall that the \emph{Gaifman graph} of a query $\q$ is the undirected
graph
using the terms of $\q$ as nodes and has an edge between terms
occurring together in an atom.
%
A query is \emph{tree-shaped} if its Gaifman graph is a tree.
\begin{restatable}{theorem}{LemPTimeCombined}\label{th:ptime-combined}
 Given a \DLLiteR KB $\Tmc\cup\Amc$ and a tree-shaped CQ $\q(\x)$ with
certain answer $\vec{a}$, one can compute in polynomial time a proof of minimal
tree size in $\Rsk(\Tmc\cup\Amc,\q(\vec{a}))$.
\end{restatable}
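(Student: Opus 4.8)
\emph{From tree proofs to homomorphisms.} The plan is to first reduce the problem to a minimum‑cost homomorphism computation. Since every \DLLiteR rule has a single‑atom body, each \gMOPO-step in $\Rsk(\Tmc\cup\Amc,\q(\a))$ has exactly one atom as a premise; moreover the conclusion of \gEXISTS (an existentially quantified CQ) and of \gCONJ (a conjunction of atoms) can be a premise of no inference rule. Hence any tree proof of $\q(\a)$ necessarily has the following shape: for each atom $\alpha$ of $\q$ it contains a \emph{chain subproof} $\mathsf{C}_\alpha$ — a path of \gMOPO-steps leading from an ABox assertion to a ground atom $\beta_\alpha$ — and these are combined by a single \gCONJ-step, closed off by a single \gEXISTS-step producing $\q(\a)$. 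The assignment $\alpha\mapsto\beta_\alpha$ (extended to terms) is precisely a homomorphism $h\colon\q\to H$ into the minimal Herbrand model with $h(\x)=\a$, and conversely every such $h$ yields a tree proof. Writing $\kappa(\beta)$ for the length of a shortest chase derivation of $\beta$ (so $\kappa(\beta)=0$ iff $\beta\in\Amc$), a shortest chain subproof of $\beta$ has tree size $2\kappa(\beta)+1$, so the minimal tree size of a proof equals $|\q|+2+2\cdot\min\bigl\{\sum_{\alpha\in\q}\kappa(h(\alpha)) : h\colon\q\to H,\ h(\x)=\a\bigr\}$, with the additive constant adjusted in the degenerate cases ($|\q|\le 1$, or $\q$ ground) and disconnected queries handled componentwise. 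It thus suffices to compute in polynomial time a homomorphism minimising $\sum_\alpha\kappa(h(\alpha))$ and then emit the associated tree proof, which has polynomial size by Lemma~\ref{lem:proof-size}.

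\emph{A polynomial search space.} The (Skolemised) \DLLiteR model $H$ is forest‑shaped: each anonymous element has a unique parent, out‑degree at most $|\Tmc|$, and an ``element type'' (its unary atoms and its parent/child edge labels) determined by the generating axiom, so there are only polynomially many types. By the polynomial witness property of \DLLiteR — equivalently, by Lemma~\ref{lem:proof-size} — there is a proof, hence a homomorphism, whose image has Skolem‑nesting depth bounded by a polynomial $N$ in $|\Tmc|$ and $|\q(\x)|$; and since $\kappa(\beta)$ is at least the maximal nesting depth of the terms occurring in $\beta$, any \emph{minimal‑cost} homomorphism also stays within depth $N$. Consequently the values $\kappa(\beta)$ for all ground atoms that a minimal homomorphism could use are precomputable in polynomial time by a single breadth‑first sweep over a layered ``type graph'' with levels $0,\dots,N$, level $0$ being the named part of the chase of $\Amc$.

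\emph{Dynamic programming over the query.} Root each connected component of $\q$ and compute bottom‑up a table that assigns, to every query node $v$ and every admissible description of $h(v)$, the minimal value of $\sum_{\alpha\in Q_v}\kappa(h(\alpha))$ over homomorphisms of the subquery $Q_v$ induced by the subtree rooted at $v$ that are consistent with that description; a description is either a named individual or a bounded‑depth position in the anonymous forest, summarised by its element type plus the data needed to keep the forest coherent (its depth and the type of its parent position). Equivalently, one can organise this computation in terms of the polynomially many \emph{tree witnesses} of a tree‑shaped query. There are polynomially many states, and the transition at $v$ merely combines the tables of its children, checking each query edge against an $H$-adjacency (an ABox edge, a parent/child step, or equality) and adding the corresponding $\kappa$-costs and the costs of the unary atoms at $v$; so the whole computation runs in polynomial time, and a standard traceback yields an optimal homomorphism, from which the minimal tree proof is assembled.

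\emph{Main obstacle.} The reduction to homomorphisms is routine; the technical weight lies entirely in the anonymous part of the canonical model — isolating a polynomial‑size, ``regular'' fragment of the (in general infinite) $H$ that provably still contains a minimal‑cost homomorphism, and setting up the dynamic program (in particular the bookkeeping of entry points and parent types, or of compatible tree witnesses) so that it simultaneously keeps a polynomial state space and faithfully ranges over all homomorphisms.
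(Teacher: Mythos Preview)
Your proposal is correct and follows essentially the same strategy as the paper: both compress the anonymous part of the canonical model to a polynomial structure (the paper uses one fresh name $b_{\exists R}$ per role rather than tracking depth and parent type), compute per-atom minimal derivation costs in that structure, and then run a bottom-up dynamic program over the tree-shaped Gaifman graph of the query to select an optimal assignment, from which the actual $\Rsk$-proof is reconstructed. One minor slip: the conclusion of \gCONJ \emph{is} admissible as the premise of \gEXISTS, but this does not affect your (correct) structural analysis of tree proofs as chains joined by a single \gCONJ and a single \gEXISTS.
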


\blue{The central property used in the proof of Theorem~\ref{th:ptime-combined} is that for tree size}
%
every atom in $\q(\vec{a})$ has a
separate proof, even if two atoms are proven in the same way.
%
To avoid this redundancy,
one could think about modifying $\gEXISTS$ slightly:
\begin{center}
  \AXC{$\phi(\vec{t})$}
  \RL{\gEXISTSb}
  \UIC{$\exists\vec{x}.\phi'(\vec{x})$}
  \DP, provided there exists $\sigma\colon\vec{x}\rightarrow\vec{t}$
s.t.\ $\phi'(\vec{x})\sigma = \phi(\vec{t})$
\end{center}
Denote the
resulting deriver
by $\Rsk'$.
Using $\gEXISTSb$, we can 
derive $\exists x,y.\, A(x)\wedge A(y)$
from $A(a)$
; with $\gEXISTS$, the premise would need to be
$A(a)\wedge A(a)$.
\blue{However}, this
modification is already sufficient to
make our problem \NP-hard for tree-shaped queries, and even \emph{without a TBox}. Because
$\Rcq$ allows
duplication of atoms
using~\TAUT, the same problem arises there.
\blue{Similarly, $\Rsk$ allows to use the same atom multiple times in one application of~\gEXISTS, but \msize only counts them once.}
\begin{restatable}{theorem}{TheNPHardModified}\label{the:np-hard-trees}
  \blue{For tree-shaped CQs, $\OPx'(\Lmc,\mtree)$ is \NP-hard. The same holds for $\OP\sk(\Lmc,\msize)$ and $\OP\cq(\Lmc,\mtree)$.}
\end{restatable}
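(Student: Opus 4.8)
The plan is to prove all three \NP-hardness claims by reductions from \textsc{Set Cover}: we are given a universe $U=\{u_1,\dots,u_r\}$, sets $S_1,\dots,S_t\subseteq U$ with $\bigcup_j S_j=U$, and a bound~$k$, and we ask whether some~$k$ of the~$S_j$ already cover~$U$; this promise version remains \NP-hard. The guiding idea throughout is that the feature isolated just before the theorem — free folding under~$\gEXISTSb$, vertex sharing under~$\msize$, atom duplication via~\TAUT{} — lets minimal proofs \emph{reuse} work, so that the minimal proof size encodes the size of a minimum cover. In each reduction the query is a star, hence tree-shaped.

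For $\OPx'(\Lmc,\mtree)$ and $\OP\sk(\Lmc,\msize)$ I would use no TBox. Take $\Amc=\{R(a,b_j)\mid 1\le j\le t\}\cup\{C_i(b_j)\mid u_i\in S_j\}$ with a fresh concept name~$C_i$ per element, and the Boolean star query $\q=\exists y_0,y_1,\dots,y_r.\,\bigwedge_{i=1}^r\bigl(R(y_0,y_i)\land C_i(y_i)\bigr)$; then $\Amc\models\q$ since $\bigcup_j S_j=U$. Any homomorphism of~$\q$ into~$\Amc$ sends $y_0$ to~$a$ and each~$y_i$ to some~$b_{j(i)}$ with $u_i\in S_{j(i)}$, so its image has exactly $\lvert J\rvert+r$ atoms where $J=\{j(i)\mid i\}$; the smallest image thus has $r+s$ atoms, with~$s$ the size of a minimum cover. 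Since $\Tmc$ is empty, no instance of~$\MOPO$ or~$\gMOPO$ can fire, so a proof can only collect ABox atoms with~$\CONJ$ (resp.~$\gCONJ$), fold them onto~$\q$, and existentially quantify. Under~$\gEXISTSb$ the folding is for free, and under~$\msize$ two occurrences of the same ABox atom are a single vertex; in both cases the minimal proof has size exactly $r+s+2$, the lower bound $r+s$ coming from the observation that the distinct leaves of any proof form a sub-ABox that still entails~$\q$ and hence contains a homomorphic image of~$\q$. Choosing the threshold $n=r+k+2$ then separates yes- from no-instances.

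For $\OP\cq(\Lmc,\mtree)$ the situation is different: the~\TAUT{}-duplication trick only pays off when the duplicated atoms carry expensive subproofs, so here I would use a small~\DLLiteR{} TBox. For each~$S_j$ introduce a fresh role~$E_j$ with $E_j\sqsubseteq E$ and a chain $\exists E_j^-\sqsubseteq G_j^1\sqsubseteq G_j^2\sqsubseteq\dots\sqsubseteq G_j^L$ of length~$L$ over fresh concept names, together with $G_j^L\sqsubseteq P_i$ for every $u_i\in S_j$; let $\Amc=\{B(a)\}$ and $B\sqsubseteq\exists E_j$ for all~$j$, and use the star query $\q=\exists y_1,\dots,y_r.\,\bigwedge_{i=1}^r\bigl(E(a,y_i)\land P_i(y_i)\bigr)$. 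To obtain a~$P_i$-atom one must reach some $E_j$-successor~$w$ of~$a$ with $u_i\in S_j$ and run the chain $G_j^1\to\dots\to G_j^L$ at~$w$; but once a chain for~$S_j$ has produced $E(a,w)\land G_j^L(w)$, the subquery $E(a,w)\land G_j^L(w)$ can be copied with~\TAUT{} to serve each remaining element of~$S_j$ at constant extra cost, so a cover~$J$ yields a tree proof of size $O(\lvert J\rvert\cdot L + r)$. Picking~$L$ polynomial and larger than the hidden constant times~$r$, and choosing the threshold~$n$ to match the upper bound for $\lvert J\rvert=k$, again separates the two cases. (This also contrasts cleanly with Theorem~\ref{th:ptime-combined}: for~$\Rsk$ each query atom is proved in isolation and the chains cannot be shared, so there the minimal tree size is $\Theta(r\cdot L)$ irrespective of the cover.)

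The routine parts are the upper bounds — making the sketched proofs explicit and tallying their size — and checking that all three constructions are of polynomial size with an~$n$ in binary. The main obstacle I expect is the matching lower bound for~$\OP\cq(\Lmc,\mtree)$: one has to show that any proof of~$\q$ contains, separately for each distinct~$S_j$ whose~$G_j^\ell$ it mentions, one full undivided chain $G_j^1\to\dots\to G_j^L$ (these chains use pairwise disjoint concept names, so they cannot be amortised across sets, and each contributes $\Omega(L)$ tree nodes together with the rule vertices they consume), and that~\TAUT{} only lets one reuse a chain \emph{within} a single set. That forces the number of sets ``used'' by the proof — hence the tree size up to the additive $O(r)$ slack — to be at least (essentially) the cost of~$s$ disjoint chains, and making this precise requires a careful analysis of how atoms are introduced along an arbitrary \Rcq{}-proof.
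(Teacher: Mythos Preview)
Your reductions for $\OPx'(\Lmc,\mtree)$ and $\OP\sk(\Lmc,\msize)$ are correct. The paper reduces from \textsc{SAT} rather than \textsc{Set Cover}, using a caterpillar query that chains the clauses via role atoms $r(x_i^{(c)},x_{i+1}^{(c)})$, but the underlying mechanism is identical: the minimal homomorphic image of~$\q$ in~$\Amc$ determines the minimal proof size, and $\gEXISTSb$ (resp.\ vertex sharing under~$\msize$) lets the proof pay only for distinct image atoms. Your star query is arguably cleaner.

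For $\OP\cq(\Lmc,\mtree)$ you have missed the key observation and taken an unnecessarily hard route. Your premise that ``the \TAUT{}-duplication trick only pays off when the duplicated atoms carry expensive subproofs'' is too narrow a reading of~\TAUT. Instantiate~\TAUT{} once with the \emph{entire body} $\phi(\vec{y})$ of~$\q$, obtaining the rule $\phi(\vec{y})\to\exists\vec{y}.\,\phi(\vec{y})$; a single application of~\MOPO{} with this rule to a ground conjunction~$\bigwedge\Amc'$ that admits a homomorphism $\pi\colon\phi(\vec{y})\to\Amc'$ (with $\pi(\phi)=\Amc'$) yields~$\q$ directly, since all atoms are matched and replaced by a fresh copy of~$\phi$. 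This one-shot use of \TAUT{}+\MOPO{} simulates~\gEXISTSb{} exactly. Hence your own empty-TBox \textsc{Set Cover} instance already works for~$\Rcq$: collect the $r+s$ image atoms with a binary \CONJ{}-tree ($2(r+s)-1$ nodes), add one \TAUT{} leaf and one \MOPO{} node, for a tree proof of size $2(r+s)+1$; set $n=2(r+k)+1$. This is precisely the paper's argument (it reuses its \textsc{SAT} construction unchanged for~$\Rcq$).

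Your chain construction might be completable, but the lower bound you flag as ``the main obstacle'' is a real one: you would need to argue that in an arbitrary \Rcq{}-proof no interleaving of \CONJ, \MOPO, \TAUT{} and \EXISTS{} can amortise the $G_j$-chains below $s\cdot L$ tree nodes, and this is substantially more delicate than anything required by the simpler argument above. There is no need to go there.
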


\hide{

Possible results:
\begin{itemize}
  \item For (hypergraph) size, this problem is in NP. Is it also NP-hard for DL-Lite (assuming the answer~$a$, the mapping~$\pi$, and the derivation structure to be given, perhaps via oracle functions)?
  
  For data complexity, we would have to construct an abstract version of the derivation structure (replacing all constants by placeholders\footnote{These are not really variables since they only stand in for arbitrary ABox individuals that we don't know yet.}) to first search for abstract proofs there and then try to instantiate them by checking which of the leaves can be mapped to the ABox? This would essentially result in a UCQ to be evaluated over the ABox, \ie should be in \ACzero(?) \alisa{here we may problems with defining $\pi$ over placeholders. Alternatively, if we apply the procedure as above: Skolemization, grounding, matching with the CQ, it feels like in $\PTime$ data complexity}

  \item For tree size, we previously had an NP-hardness result for exponential derivation structure with unary encoding (\aka polynomial threshold), but only for \ELI. On the other hand, it is not clear how to manipulate the above (exponential) structure to \eg use the Dijkstra algorithm to obtain a better upper bound.
  
  For data complexity, it should be \ACzero due to similar arguments as above.

  \item For depth, the same problems appear, \ie we need an NP lower bound for \DLLite.
  
  \item Weighted sum/product behave similarly to size and could be used to formulate something about probabilistic databases. The problem is again that we only have lower bounds for \EL/$\EL_\bot$ so far (see the other main.tex in this repo).
\end{itemize}
}

\section{Metric Temporal CQs}

We now consider the problem of finding a small proof for \emph{temporal} query answering, generalizing the results from Sections~\ref{sec:DS4OMQA} and~\ref{sec:results}.
In this setting, TBox axioms hold globally, \ie at all time points, the ABox contains information about the state of the world in different time intervals, and the query contains (metric) temporal operators.

An \emph{interval} $\iota$ is a \textit{nonempty} subset of $\Zbb$ of the form $[t_1,t_2]$, where $t_1,t_2 \in \Zbb\cup\{\infty\}$ and $t_1 \leq t_2$ (for simplicity, we write $[\infty,t_2]$ for $(-\infty,t_2]$ and $[t_1,\infty]$ instead of $[t_1,\infty)$);\footnote{\blue{This allows us to avoid considering special cases in the interval arithmetic below.}} 
 %
$t_1$ and $t_2$ are encoded in binary. 
A temporal ABox~$\Amc$ is a finite set of facts of the form $A(a)@\iota$ or $P(a,b)@\iota$, where $A(a)$ and $P(a,b)$ are assertions and $\iota$ is an interval. The fact $A(a)@\iota$ states that $A(a)$ holds throughout the interval $\iota$. We denote by $\tem(\Amc)$ the \blue{multi}{}set of intervals 
that occur in $\Amc$ and $\lvert \tem(\Amc) \rvert$ is the sum of their lengths.
A \emph{temporal interpretation} $\Imf = ( \Delta^{\Imf}, (\Imc_i)_{i\in \Zbb})$, is a collection of DL interpretations $\Imc_i = (\Delta^\Imf, \cdot^{\Imc_i})$, $i \in \Zbb$, over~$\Delta^\Imf$.
%
$\Imf$ \emph{satisfies}
  a TBox axiom~$\alpha$ if each $\Imc_i$, $i\in\Zbb$, satisfies~$\alpha$,
  and it satisfies a temporal assertion $\alpha@\iota$ if each $\Imc_i$, $i\in\iota$, satisfies~$\alpha$.
%

We use the finite-range positive version of metric temporal conjunctive queries
(MTCQs) introduced
in~\cite{DBLP:conf/ruleml/BorgwardtFK19,DBLP:journals/tplp/BorgwardtFK22},
combining CQs with MTL operators~\cite{AlHe-JACM94,GuJO-ECAI16,BBK+-FroCoS17}.

\begin{definition}\label{def:mtcq}
An MTCQ is of the form $\q(\x,w)=\phi(\x)@w$, where $\phi$ is built according to
\[
  \phi::= \psi \mid \top \mid 
  \phi \land \phi \mid \phi \lor \phi 
  \mid \boxminus_{I}\phi \mid \boxplus_{I}\phi
\mid \phi \luntil_{I} \phi \mid \phi \lsince_{I} \phi,
\]
with $w$ an interval variable, $\psi$ a CQ, $I$ a finite interval with non-negative
endpoints, and $\x$ the free variables of all CQs in~$\phi$. A certain answer
to $\q(\x,w)$ over $\Tmc\cup\Amc$ is a pair $(\answer,\iota)$ such that
$\answer\subseteq \ind(\Amc)$, $\iota$ is an interval and, for any $t\in \iota$
and any model $\Imf$ of $\Tmc\cup \Amc$, 
we have~$\Imf,t \models \phi(\answer)$ according to Table~\ref{fig:tcq-semantics}. We denote this as $\Tmc\cup\Amc\models
\q(\answer,\iota)$.

\end{definition}
%
%
\begin{table}
 \caption{Semantics of (Boolean) MTCQs for
$\Imf=(\Delta^\Imf,(\Imc_i)_{i\in\Zbb})$ and $i\in\Zbb$.}
 \label{fig:tcq-semantics}
  \setlength{\tabcolsep}{10pt}
\begin{tabular}{lp{9cm}}
  \toprule
  $\phi$ & $\Imf, i \models \phi$ iff \\
  \midrule
  CQ~$\psi$ & $\Imc_i \models \psi$ \\
  $\top$ & true \\
  $\phi \land \psi$ & $\Imf, i \models \phi \text{ and } \Imf, i \models \psi$ \\

  $\phi \lor \psi$ & $\Imf, i \models \phi \text{ or } \Imf, i \models \psi$ \\
    $\boxplus_{I}\phi$ & $\forall k \in I \text{ such that } \Imf, i+k \models \phi$ \\
      $\boxminus_{I}\phi$ & $\forall k \in I \text{ such that } \Imf, i-k \models \phi$ \\
  $\phi\luntil_{I} \psi$ & $\exists k \in I$ such that $\Imf, i+k \models \psi$ and $\forall j: 0 \leq j < k: \Imf, i+j \models \phi$ \\
  $\phi\lsince_{I} \psi$ & $\exists k\in I$ such that $\Imf, i-k \models \psi$ and $\forall j: 0 \leq j < k: \Imf, i-j \models \phi$ \\
  \bottomrule
\end{tabular}
\end{table}


For temporal extensions of Definitions~\ref{def:cq-ds} and~\ref{def:cq-ds-skolem}, we will interpret $A \sqsubseteq A'$ now as the global temporal rule $A(x) \to A'(x)$ holding in any possible interval.
\begin{center}
  \AXC{$(\exists\vec{x}.\,\phi(\vec{x}))@\iota$}
  \AXC{$\psi(\vec{y},\vec{z})\to\exists\vec{u}.\,\chi(\vec{z},\vec{u})$}
  \RL{\tMOPO}
  \BIC{$(\exists\vec{w}.\rho(\vec{w}))@\iota$}
  \DP
\end{center}

Similarly, we need temporal versions of~\CONJ and \EXISTS, where all CQs are annotated with the same interval variable.
In addition, we need an inference for disjunctive MTCQS:
\begin{center}
  \AXC{$\phi(\vec{x})@\iota$}
  \RL{\LOR}
  \UIC{$(\phi(\vec{x})\lor \psi(\vec{y}))@\iota $}
  \DP
\end{center}

To provide a proof for a \emph{temporal} query, we need to be able to coalesce, \ie merge intervals:
\begin{center}
  \AXC{$\exists \x_1.\,\phi(\x_1)@\iota_1$}
  \AXC{$\dots$}
  \AXC{$\exists \x_n.\,\phi(\x_n)@\iota_n$}
  \RL{\COAL}
  \TIC{$(\exists \x.\,\phi(\x))@\bigcup_{i=1}^n\iota_i$}
  \DP
\end{center}
where $\bigcup_{i=1}^s\iota_i$ is a single interval and $\phi(\x_1),\dots,\phi(\x_n)$ are identical up to variable renaming. 
On the other hand, we also need an inverse operation to shrink intervals:
\begin{center}
  \AXC{$\exists\vec{x}.\,\phi(\x)@\iota$}
  \RL{\SEP}
  \UIC{$\exists\vec{x}.\,\phi(\x)@\iota'$}
  \DP
\end{center}
where $\iota' \subseteq \iota$.
%
%
Both inferences are needed to infer all intervals~$\iota$ with $\Tmc\cup\Amc\models\exists\vec{x}.\,\phi(\x)@\iota$.


Finally, we need inferences for the temporal operators, where for $\luntil_{[r_1,r_2]}$ we only consider the case where $r_1>0$ since $\phi \luntil_{[0,r_2]}\psi$ is equivalent to $\psi\lor(\phi \luntil_{[1,r_2]}\psi)$:

\begin{center}
  \AXC{$\phi(\x)@[t_1,t_2]$}
  \RL{\textbf{($\boxplus$)}}
  \UIC{$\boxplus_{[r_1,r_2]}\phi(\x)@[ t_1 - r_1 ,t_2 - r_2 ]$}  
  \DP
  \qquad
  \AXC{$\phi(\x)@\iota$}
    \AXC{$\psi(\y)@\iota'$}
  \RL{\textbf{($\luntil$)}}
  \BIC{$\phi(\x) \luntil_{[r_1,r_2]} \psi(\y)@(\nu-[r_1,r_2])\cap\iota$}
  \DP
\end{center}
where
$[w_1,w_2] - [r_1,r_2] := [w_1-r_2,w_2-r_1]$ and
$\nu:=(\iota+1)\cap\iota'$ (all time points where $\psi$-s are immediately preceded by $\phi$-s), and none of the involved intervals should be empty.
%
Inferences for $\boxminus$ and $\lsince$ are similar.
%
%
We denote the resulting deriver by \Rtcq. A Skolemized variant \Rtsk can be
defined similarly with temporalized versions of \gMOPO, \gCONJ, and \gEXISTS.
%
We can now lift Theorems~\ref{th:TheACzeroRewritable} and~\ref{th:TheNPDLLite} to this setting.

\begin{restatable}{theorem}{TempNPupper}
  If CQ answering in \L is UCQ-rewritable, then MTCQ answering is also UCQ-rewritable
  \blue{and $\OPtx(\L,\m)$ is in \ACzero in data complexity.}
  \blue{Moreover, $\OPtx(\DLLiteR,\m)$ is \NP-complete.}
 Let $\R\in\{\Rtcq,\Rtsk\}$.
 Then, it is \NP-complete to decide whether, given a \DLLiteR TBox $\Tmc$, a temporal ABox~$\Amc$, 
$\q(\vec{a},\iota)$ s.t.\ $\Tmc\cup\Amc\models \q(\vec{a},\iota)$, and $n$ in unary or binary
encoding,
there exists a proof in
$\R(\Tmc\cup\Amc,\q(\vec{a},\iota))$ of (tree) size at most $n$.
\end{restatable}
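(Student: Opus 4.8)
The plan is to lift the relevant parts of Theorems~\ref{th:TheACzeroRewritable} and~\ref{th:TheNPDLLite} one at a time, reusing the machinery behind them.

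\emph{Rewritability and \ACzero data complexity.} Here I would exploit that, on a discrete timeline, the metric operators of an MTCQ reach only a bounded, query-determined distance into the past and the future, and that the certain answers to a positive MTCQ are governed by a canonical temporal model whose snapshot at time $i$ is the canonical model of the ABox snapshot at $i$. Replacing each CQ $\psi$ in $\phi$ by a UCQ rewriting of $\psi$ over $\Tmc$ (which exists by assumption) and leaving the temporal structure untouched therefore yields a temporal formula that, evaluated over $\Amc$ with the TBox discarded, computes exactly the certain answers $\q(\a,\iota)$; the remaining interval bookkeeping (shifting the endpoints of intervals in $\tem(\Amc)$ by the constants of $\phi$, intersecting, and testing containment) is first-order over the binary endpoints, so this is an FO---after flattening, a union of CQs---rewriting, mirroring~\cite{DBLP:conf/ruleml/BorgwardtFK19,DBLP:journals/tplp/BorgwardtFK22}. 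Evaluating a fixed MTCQ over a temporal ABox is in \ACzero, which gives \ACzero data complexity of MTCQ answering. For $\OPtx(\L,\m)$ I then repeat the argument of Theorem~\ref{th:TheACzeroRewritable}: rewritability implies that the minimal proof has size bounded by a function of $\Tmc$ and $\q$ alone (a proof needs only as many ABox facts, plus TBox-only reasoning, as a single disjunct of the rewriting), so only constantly many proof ``templates'' arise; for each template of (tree) size $\le n$, whether it can be instantiated over $\Amc$ is expressible by a (temporal) CQ over the data, and the disjunction of these finitely many \ACzero tests is still in \ACzero.

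\emph{\NP-completeness for \DLLiteR.} The lower bound is inherited verbatim from Theorem~\ref{th:TheNPDLLite}: an ordinary CQ is the special case of an MTCQ with no temporal operators and a point answer interval, so the reduction from \DLLiteR CQ answering already witnesses \NP-hardness, for both $\msize$ and $\mtree$ and for $n$ given in unary (hence also binary). For membership I would establish the temporal counterpart of Lemma~\ref{lem:proof-size}, but phrased as a bound on the \emph{size of the derivation structure} rather than on the tree size: for a \DLLiteR KB, the part of $\Rtsk(\Tmc\cup\Amc,\q(\a,\iota))$ that has to be searched is a directed hypergraph of polynomial size. This relies on two facts. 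First, by the polynomial witness property of \DLLiteR, each entailed CQ is satisfied after polynomially many chase steps in every ABox snapshot, so only polynomially many ground atoms over Skolem terms of polynomial nesting depth are relevant. Second, only polynomially many \emph{intervals} are relevant: every endpoint that a useful proof needs to mention is obtained from an endpoint of $\iota$ or of an interval in $\tem(\Amc)$ by a shift that is an accumulated sum of constants from $\q$ along a unique path through the operator structure of $\q$, so there are only polynomially many such endpoints, each of polynomial bit-length, hence polynomially many intervals. For a polynomial-size derivation structure the general results of~\cite{LPAR23:Finding_Small_Proofs_for,ABB+-CADE21} put both measures in \NP: an \NP procedure guesses a subproof together with the substitutions witnessing the temporal \gMOPO-steps, checks in polynomial time that every hyperedge is a valid instance of \tMOPO, \COAL, \SEP, the temporal-operator rules, etc.\ (all of which amounts to polynomial interval arithmetic on binary endpoints), and finally computes its size ($\msize$: count vertices; $\mtree$: a bottom-up summation over the DAG yielding numbers with polynomially many bits) and compares it with~$n$. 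The case of $\Rtcq$ follows via the temporal analogue of Lemma~\ref{lem:transformation}, which transfers proofs between the two derivers with only polynomial blow-up of the relevant structure.

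\emph{Main obstacle.} The delicate step is the temporal analogue of Lemma~\ref{lem:proof-size}. In the non-temporal case one actually obtains a polynomial bound on the tree size of some proof, which automatically bounds the number of vertices to be guessed; this is no longer available, because \COAL forces genuine sharing---a subquery holding on $\Theta(\card{\Amc})$ disjoint intervals under $k$ nested temporal operators seems to admit only proofs whose tree unraveling has size $\card{\Amc}^{\Omega(k)}$, so a polynomial tree-size bound simply fails. One therefore has to argue \emph{directly} that the relevant fragment of the derivation structure is polynomial, which in turn hinges on controlling how interval endpoints propagate through coalescing, shrinking, and the metric operators despite their binary encoding. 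Once this structural bound is established, completeness of the temporal derivers (via the canonical temporal model and the temporal chase, exactly as in the completeness argument following Lemma~\ref{lem:transformation}) and the remaining verification details are routine adaptations of the non-temporal proofs.
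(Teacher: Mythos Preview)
Your overall decomposition matches the paper's: the rewritability and \ACzero parts are handled essentially the same way (the paper uses an explicit reduction to \textsl{next}-only formulas and then Kamp's theorem to get an FO($<$)-rewriting, while you argue directly via the canonical temporal model, but both routes land on the same positive-existential rewriting and then reuse the argument of Theorem~\ref{th:TheACzeroRewritable}); the \NP lower bound is identical.

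Where you and the paper genuinely diverge is the \NP upper bound. The paper does \emph{not} bound the size of the temporal derivation structure. Instead it proves a temporal analogue of Lemma~\ref{lem:proof-size} (appearing as Lemma~\ref{lem:bound-size}): using \emph{ruler intervals} (the maximal intervals on which the ABox is constant) it constructs, for every entailed $\q(\a,\iota)$, a concrete proof in $\Rtsk$ whose \emph{tree size} is polynomial in $|\Tmc|$, $|\q|$, and either $|\Amc|$ or $|\tem(\q)|+|\iota|$; the \NP membership then follows by guess-and-check exactly as in Theorem~\ref{th:TheNPDLLite}, and is transferred to $\Rtcq$ via the temporal version of Lemma~\ref{lem:transformation}. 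Your conviction that ``a polynomial tree-size bound simply fails'' because \COAL forces $|\Amc|^{\Omega(k)}$-size unravelings is therefore contrary to what the paper actually establishes. The intuition you are missing is that in the paper's construction one proceeds bottom-up over the subformulas of~$\q$: for each subformula one derives it at (a subset of) the ruler intervals, applies one temporal-operator rule per interval, and then a single \COAL; since each subformula--interval pair is produced once and used once as a premise at the next level, the resulting proof is essentially tree-shaped already, so no exponential sharing arises and tree size stays polynomial in $|\q|\cdot|\Amc|$.

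That said, your alternative---bounding the relevant fragment of the derivation structure by observing that only polynomially many endpoints (ABox endpoints shifted by the accumulated constants along each root-to-subformula path in~$\q$) and polynomially many Skolem atoms can occur---is a perfectly viable route to the same \NP upper bound; it just replaces the paper's explicit proof construction by a structural finiteness argument. The trade-off is that your approach avoids the somewhat delicate tree-size accounting, at the price of having to argue carefully that \SEP and \COAL never need to manufacture endpoints outside the polynomial set you describe.
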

%



\section{Conclusion}

We started to explore a framework for proofs of answers to conjunctive queries. So far, we only considered DLs whose axioms can be formulated as existential rules, but we also want to explore more expressive logics in future work.
Other interesting research questions include derivers that combine TBox and query entailment rules, \eg \Rcq plus the rules of the ELK reasoner~\cite{DBLP:journals/jar/KazakovKS14}.
Instead of proofs, one could also try to show a canonical model to a user in order to explain query answers.
For explaining missing answers, we also want to continue investigating how to find (optimal) counter-interpretations or abduction results~\cite{DBLP:conf/ijcai/Koopmann21}.

\subsection*{Acknowledgments}
  This work was supported by DFG in grant 389792660, TRR~248
(\url{https://perspicuous-computing.science}), and QuantLA, GRK 1763
(\url{https://lat.inf.tu-dresden.de/quantla}).

\bibliographystyle{abbrv}
\bibliography{bibs}

\begin{thebibliography}{10}

\bibitem{AbHV-95}
S.~Abiteboul, R.~Hull, and V.~Vianu.
\newblock {\em Foundations of Databases}.
\newblock Addison-Wesley, 1995.

\bibitem{LPAR23:Finding_Small_Proofs_for}
C.~Alrabbaa, F.~Baader, S.~Borgwardt, P.~Koopmann, and A.~Kovtunova.
\newblock Finding small proofs for description logic entailments: {T}heory and
  practice.
\newblock In E.~Albert and L.~Kovacs, editors, {\em LPAR-23: 23rd International
  Conference on Logic for Programming, Artificial Intelligence and Reasoning},
  volume~73 of {\em EPiC Series in Computing}, pages 32--67. EasyChair, 2020.

\bibitem{ABB+-DL20}
C.~Alrabbaa, F.~Baader, S.~Borgwardt, P.~Koopmann, and A.~Kovtunova.
\newblock On the complexity of finding good proofs for description logic
  entailments.
\newblock In S.~Borgwardt and T.~Meyer, editors, {\em Proceedings of the 33rd
  International Workshop on Description Logics (DL'20)}, volume 2663 of {\em
  CEUR Workshop Proceedings}, 2020.

\bibitem{ABB+-CADE21}
C.~Alrabbaa, F.~Baader, S.~Borgwardt, P.~Koopmann, and A.~Kovtunova.
\newblock Finding good proofs for description logic entailments using recursive
  quality measures.
\newblock In A.~Platzer and G.~Sutcliffe, editors, {\em Proceedings of the 28th
  International Conference on Automated Deduction (CADE'21)}, volume 12699 of
  {\em LNCS}, pages 291--308, 2021.

\bibitem{DBLP:conf/dlog/AlrabbaaBDFK20}
C.~Alrabbaa, F.~Baader, R.~Dachselt, T.~Flemisch, and P.~Koopmann.
\newblock Visualising proofs and the modular structure of ontologies to support
  ontology repair.
\newblock In S.~Borgwardt and T.~Meyer, editors, {\em Proceedings of the 33rd
  International Workshop on Description Logics {(DL} 2020)}, volume 2663 of
  {\em {CEUR} Workshop Proceedings}. CEUR-WS.org, 2020.

\bibitem{DBLP:conf/ki/AlrabbaaHT21}
C.~Alrabbaa, W.~Hieke, and A.~Turhan.
\newblock Counter model transformation for explaining non-subsumption in {EL}.
\newblock In C.~Beierle, M.~Ragni, F.~Stolzenburg, and M.~Thimm, editors, {\em
  Proceedings of the 7th Workshop on Formal and Cognitive Reasoning}, volume
  2961 of {\em {CEUR} Workshop Proceedings}, pages 9--22. CEUR-WS.org, 2021.

\bibitem{AlHe-JACM94}
R.~Alur and T.~A. Henzinger.
\newblock A really temporal logic.
\newblock {\em J.\ ACM}, 41(1):181--204, 1994.

\bibitem{DBLP:conf/ijcai/Baader03a}
F.~Baader.
\newblock Terminological cycles in a description logic with existential
  restrictions.
\newblock In G.~Gottlob and T.~Walsh, editors, {\em IJCAI-03, Proceedings of
  the Eighteenth International Joint Conference on Artificial Intelligence},
  pages 325--330. Morgan Kaufmann, 2003.

\bibitem{BBK+-FroCoS17}
F.~Baader, S.~Borgwardt, P.~Koopmann, A.~Ozaki, and V.~Thost.
\newblock Metric temporal description logics with interval-rigid names.
\newblock In {\em Proc.\ of the 11th Int.\ Symp.\ on Frontiers of Combining
  Systems (FroCoS'17)}, pages 60--76. Springer, 2017.

\bibitem{DBLP:conf/ki/BaaderPS07}
F.~Baader, R.~Pe{\~{n}}aloza, and B.~Suntisrivaraporn.
\newblock Pinpointing in the description logic {$\mathcal{EL}^+$}.
\newblock In {\em {KI} 2007: Advances in Artificial Intelligence, 30th Annual
  German Conference on AI, {KI} 2007}, pages 52--67, 2007.

\bibitem{DBLP:conf/otm/BorgidaCR08}
A.~Borgida, D.~Calvanese, and M.~Rodriguez{-}Muro.
\newblock Explanation in the {{\textit{DL-Lite}}} family of description logics.
\newblock In R.~Meersman and Z.~Tari, editors, {\em On the Move to Meaningful
  Internet Systems: {OTM} 2008}, volume 5332 of {\em Lecture Notes in Computer
  Science}, pages 1440--1457. Springer, 2008.

\bibitem{DBLP:conf/ecai/BorgidaFH00}
A.~Borgida, E.~Franconi, and I.~Horrocks.
\newblock Explaining {$\mathcal{ALC}$} subsumption.
\newblock In {\em {ECAI} 2000, Proceedings of the 14th European Conference on
  Artificial Intelligence}, pages 209--213, 2000.

\bibitem{DBLP:conf/ruleml/BorgwardtFK19}
S.~Borgwardt, W.~Forkel, and A.~Kovtunova.
\newblock Finding new diamonds: {T}emporal minimal-world query answering over
  sparse {AB}oxes.
\newblock In {\em Proc.\ of the 3rd Int.\ Joint Conf.\ on Rules and Reasoning,
  RuleML+RR 2019}, volume 11784 of {\em LNCS}, pages 3--18. Springer, 2019.

\bibitem{DBLP:journals/tplp/BorgwardtFK22}
S.~Borgwardt, W.~Forkel, and A.~Kovtunova.
\newblock Temporal minimal-world query answering over sparse {AB}oxes.
\newblock {\em Theory Pract. Log. Program.}, 22(2):193--228, 2022.

\bibitem{DBLP:conf/ijcai/BourgauxOPP20}
C.~Bourgaux, A.~Ozaki, R.~Pe{\~{n}}aloza, and L.~Predoiu.
\newblock Provenance for the description logic elhr.
\newblock In C.~Bessiere, editor, {\em Proceedings of the Twenty-Ninth
  International Joint Conference on Artificial Intelligence, {IJCAI} 2020},
  pages 1862--1869. ijcai.org, 2020.

\bibitem{BKR+-JAIR18}
S.~Brandt, E.~G. Kalaycı, V.~Ryzhikov, G.~Xiao, and M.~Zakharyaschev.
\newblock Querying log data with metric temporal logic.
\newblock {\em J.\ Artif.\ Intell.\ Res.}, 62:829--877, 2018.

\bibitem{DBLP:journals/ws/CaliGL12}
A.~Cal{\`{\i}}, G.~Gottlob, and T.~Lukasiewicz.
\newblock A general datalog-based framework for tractable query answering over
  ontologies.
\newblock {\em J. Web Semant.}, 14:57--83, 2012.

\bibitem{JAR-2007}
D.~Calvanese, G.~De~Giacomo, D.~Lembo, M.~Lenzerini, and R.~Rosati.
\newblock Tractable reasoning and efficient query answering in description
  logics: {T}he {DL-Lite} family.
\newblock {\em J. of Automated Reasoning}, 39(3):385--429, 2007.

\bibitem{DBLP:conf/ijcai/CalvaneseLOP019}
D.~Calvanese, D.~Lanti, A.~Ozaki, R.~Pe{\~{n}}aloza, and G.~Xiao.
\newblock Enriching ontology-based data access with provenance.
\newblock In S.~Kraus, editor, {\em Proceedings of the Twenty-Eighth
  International Joint Conference on Artificial Intelligence, {IJCAI} 2019},
  pages 1616--1623. ijcai.org, 2019.

\bibitem{DBLP:conf/kr/CalvaneseOSS12}
D.~Calvanese, M.~Ortiz, M.~Simkus, and G.~Stefanoni.
\newblock The complexity of explaining negative query answers in dl-lite.
\newblock In G.~Brewka, T.~Eiter, and S.~A. McIlraith, editors, {\em Principles
  of Knowledge Representation and Reasoning: Proceedings of the Thirteenth
  International Conference, {KR} 2012}. {AAAI} Press, 2012.

\bibitem{EX_RULES_ABDUCTION}
{\.I}.~{\.I}. Ceylan, T.~Lukasiewicz, E.~Malizia, C.~Molinaro, and
  A.~Vaicenavicius.
\newblock Explanations for negative query answers under existential rules.
\newblock In D.~Calvanese, E.~Erdem, and M.~Thielscher, editors, {\em
  Proceedings of {KR} 2020}, pages 223--232. AAAI Press, 2020.

\bibitem{DBLP:conf/ekaw/CroceL18}
F.~Croce and M.~Lenzerini.
\newblock A framework for explaining query answers in {DL-Lite}.
\newblock In C.~Faron{-}Zucker, C.~Ghidini, A.~Napoli, and Y.~Toussaint,
  editors, {\em Knowledge Engineering and Knowledge Management - 21st
  International Conference, {EKAW} 2018}, volume 11313 of {\em Lecture Notes in
  Computer Science}, pages 83--97. Springer, 2018.

\bibitem{DBLP:journals/tcs/FaginKMP05}
R.~Fagin, P.~G. Kolaitis, R.~J. Miller, and L.~Popa.
\newblock Data exchange: semantics and query answering.
\newblock {\em Theor. Comput. Sci.}, 336(1):89--124, 2005.

\bibitem{DBLP:journals/ai/GottlobKKPSZ14}
G.~Gottlob, S.~Kikot, R.~Kontchakov, V.~V. Podolskii, T.~Schwentick, and
  M.~Zakharyaschev.
\newblock The price of query rewriting in ontology-based data access.
\newblock {\em Artif. Intell.}, 213:42--59, 2014.

\bibitem{GuJO-ECAI16}
V.~Guti\'errez-Basulto, J.~C. Jung, and A.~Ozaki.
\newblock On metric temporal description logics.
\newblock In {\em Proc.\ ECAI}, pages 837--845. IOS Press, 2016.

\bibitem{Horr-11}
M.~Horridge.
\newblock {\em Justification Based Explanation in Ontologies}.
\newblock PhD thesis, University of Manchester, UK, 2011.

\bibitem{DBLP:conf/semweb/HorridgePS10}
M.~Horridge, B.~Parsia, and U.~Sattler.
\newblock Justification oriented proofs in {OWL}.
\newblock In {\em The Semantic Web - {ISWC} 2010 - 9th International Semantic
  Web Conference, {ISWC} 2010, Part {I}}, pages 354--369, 2010.

\bibitem{Immerman99}
N.~Immerman.
\newblock {\em Descriptive Complexity}.
\newblock Springer, 1999.

\bibitem{phd-kamp}
H.~W. Kamp.
\newblock {\em Tense Logic and the Theory of Linear Order}.
\newblock {PhD} thesis, Computer Science Department, University of California
  at Los~Angeles, USA, 1968.

\bibitem{KaKS-DL17}
Y.~Kazakov, P.~Klinov, and A.~Stupnikov.
\newblock Towards reusable explanation services in protege.
\newblock In A.~Artale, B.~Glimm, and R.~Kontchakov, editors, {\em Proc.\ of
  the 30th Int.\ Workshop on Description Logics (DL'17)}, volume 1879 of {\em
  CEUR Workshop Proceedings}, 2017.

\bibitem{DBLP:journals/jar/KazakovKS14}
Y.~Kazakov, M.~Kr{\"{o}}tzsch, and F.~Simancik.
\newblock The incredible {ELK} - {F}rom polynomial procedures to efficient
  reasoning with {$\mathcal{EL}$} ontologies.
\newblock {\em J.\ Autom.\ Reason.}, 53(1):1--61, 2014.

\bibitem{DBLP:conf/ijcai/Koopmann21}
P.~Koopmann.
\newblock Signature-based abduction with fresh individuals and complex concepts
  for description logics.
\newblock In Z.~Zhou, editor, {\em Proceedings of the Thirtieth International
  Joint Conference on Artificial Intelligence, {IJCAI} 2021}, pages 1929--1935.
  ijcai.org, 2021.

\bibitem{DeMc-96}
D.~L. McGuinness.
\newblock {\em Explaining Reasoning in Description Logics}.
\newblock PhD thesis, Rutgers University, NJ, USA, 1996.

\bibitem{DBLP:journals/cor/NielsenAP05}
L.~R. Nielsen, K.~A. Andersen, and D.~Pretolani.
\newblock Finding the \emph{K} shortest hyperpaths.
\newblock {\em Computers {\&} {OR}}, 32:1477--1497, 2005.

\bibitem{Pena-09}
R.~Peñaloza.
\newblock {\em Axiom-Pinpointing in Description Logics and Beyond}.
\newblock PhD thesis, Technische Universität Dresden, Germany, 2009.

\bibitem{DBLP:conf/dlog/Rosati07}
R.~Rosati.
\newblock On conjunctive query answering in {EL}.
\newblock In D.~Calvanese, E.~Franconi, V.~Haarslev, D.~Lembo, B.~Motik,
  A.~Turhan, and S.~Tessaris, editors, {\em Proceedings of the 2007
  International Workshop on Description Logics (DL2007)}, volume 250 of {\em
  {CEUR} Workshop Proceedings}. CEUR-WS.org, 2007.

\bibitem{ScCo03}
S.~Schlobach and R.~Cornet.
\newblock Non-standard reasoning services for the debugging of description
  logic terminologies.
\newblock In G.~Gottlob and T.~Walsh, editors, {\em Proc.\ of the 18th Int.\
  Joint Conf.\ on Artificial Intelligence (IJCAI~2003)}, pages 355--362. Morgan
  Kaufmann, 2003.

\bibitem{Stefanoni-11}
G.~Stefanoni.
\newblock Explaining query answers in lightweight ontologies.
\newblock Diploma thesis, Technische Universität Wien, 2011.

\bibitem{DBLP:conf/ijcai/WalegaGKK19}
P.~A. Walega, B.~C. Grau, M.~Kaminski, and E.~V. Kostylev.
\newblock Datalogmtl: Computational complexity and expressive power.
\newblock In S.~Kraus, editor, {\em Proceedings of the Twenty-Eighth
  International Joint Conference on Artificial Intelligence, {IJCAI} 2019},
  pages 1886--1892. ijcai.org, 2019.

\bibitem{DBLP:conf/ijcai/WalegaGKK20}
P.~A. Walega, B.~C. Grau, M.~Kaminski, and E.~V. Kostylev.
\newblock Tractable fragments of datalog with metric temporal operators.
\newblock In C.~Bessiere, editor, {\em Proceedings of the Twenty-Ninth
  International Joint Conference on Artificial Intelligence, {IJCAI} 2020},
  pages 1919--1925. ijcai.org, 2020.

\end{thebibliography}

\newpage

\appendix

\section{Additional Definitions: Hypergraphs}

\begin{definition}[Hypergraph]
A \emph{(directed, labeled) hypergraph}~\cite{DBLP:journals/cor/NielsenAP05} is a triple $H=(V,E,\el)$, where
\begin{itemize}
\item $V$ is a set of \emph{vertices},
\item $E$ is a set of \emph{hyperedges}~$(S,d)$ with a tuple of \emph{source vertices} $S$
and
\emph{target vertex} $d\in V$, and
\item $\el\colon V\to \mathcal{S}_\L$ is a \emph{labeling function} that assigns sentences to
vertices.
\end{itemize}
\end{definition}
We extend the function~$\ell$ to hyperedges as follows: $\ell(S,d):=\big((\ell(s))_{s\in S},\ell(d)\big)$.
\blue{For finite~$V$,} the \emph{size} of~$H$, denoted~$|H|$, is measured by the size of the labels of
its hyperedges: $$\card{H}:=\sum_{(S,d)\in E}\card{(S,d)}, \text{ where } \card{(S,d)}:=\card{\el(d)}+\sum_{v\in
S}\card{\el(v)}.$$
A vertex $v\in V$ is called a \emph{leaf} if it has no incoming hyperedges, \ie there is no
$(S,v)\in E$; and~$v$ is a \emph{sink} if it has no outgoing hyperedges, \ie there is no $(S,d)\in
E$ such that $v\in S$. We denote the set of all leafs and the set of all sinks in~$H$ as
$\leaf(H)$ and $\sink(H)$, respectively.

A hypergraph $H'= (V', E', \el')$ is called a \emph{subgraph} of $H= (V, E, \el)$ if $V'\subseteq V$, $E'\subseteq E$ and $\el'=\el|_{V'}$.
In this case, we also say that $H$ \emph{contains} $H'$ and write $H' \subseteq H$.
Given two hypergraphs $H_1=(V_1,E_1,\el_1)$ and $H_2=(V_2,E_2,\el_2)$ \st
$\el_1(v)=\el_2(v)$ for every $v\in V_1\cap V_2$, the \emph{union} of the two hypergraphs is
defined as $H_1 \cup H_2:=$ $(V_1\cup V_2,E_1\cup E_2, \el_1 \cup \el_2)$.

\begin{definition}[Cycle, Tree]
Given a hypergraph $H=(V,E,\el)$ and $s,t\in V$, a \emph{path} $P$ of length $q\geq 0$ in~$H$ from~$s$ to~$t$ is a sequence of vertices and
hyperedges
\[ P=(d_0,i_1,(S_1,d_1),d_1,i_2,(S_2,d_2),\dots, d_{q-1},i_q,(S_q,d_q),d_q), \]
where $d_0=s$, $d_q=t$, and $d_{j-1}$ occurs in~$S_j$ at the $i_j$-th position, for all $j$, $1\le j\le q$. If
there is such a path of length $q > 0$ in~$H$, we say that $t$ is \emph{reachable} from~$s$
in~$H$.
If $t = s$, then $P$ is called a \emph{cycle}.
The hypergraph~$H$ is \emph{acyclic} if it does not contain a cycle.
%
%
The hypergraph~$H$ is \emph{connected} if
every vertex is connected to every other vertex by a series of paths and reverse paths.

A hypergraph $H=(V,E,\el)$ is called a \emph{tree} with \emph{root} $t\in V$ if $t$ is reachable
from every vertex $v\in V\setminus\{t\}$ by exactly one path. In particular, the root is the only
sink in a tree, and all trees are acyclic and connected.
\end{definition}

\begin{definition}[Homomorphism]\label{def:homomorphism}
Let $H=(V,E,\el)$, $H'=(V',E',\el')$ be two hypergraphs.  A \emph{homomorphism} from $H$ to 
$H'$, denoted $h\colon H\rightarrow H'$, is a mapping $h\colon V\to V'$ s.t.\ for 
all~$(S,d)\in E$, one has $h(S,d):=\big((h(v))_{v\in S},\,h(d)\big)\in E'$ and, for all $v\in V$, it 
holds that~$\el'(h(v))=\el(v)$.
Such an~$h$ is an \emph{isomorphism} if it is a bijection, and its inverse, $h^-\colon H' \to H$,
is also a homomorphism.
\end{definition}

\begin{definition}[Hypergraph Unraveling]
  The \emph{unraveling} of an acyclic hypergraph $H=(V,E,\ell)$ at a vertex $v\in V$ is the tree 
  $H_T=(V_T,E_T,\ell_T)$, where $V_T$ consists of all paths in~$H$ that end in~$v$, $E_T$ 
  contains all hyperedges $((P_1,\dots,P_n),P)$ where each~$P_i$ is of the 
  form $\big(d_i,i,((d_1,\dots,d_n),d)\big)\cdot P$, and $\ell_T(P)$ is the label of the starting 
  vertex of~$P$ in~$H$.
  Moreover, the mapping $h_T\colon V_T\to V$ that maps each path to its starting vertex (and in 
  particular~$h_T(v)=v$) is a homomorphism from~$H_T$ to~$H$.
\end{definition}

The \emph{tree size} $\mtree(\p)$ of a proof~\p is defined recursively as follows:
\begin{align*}
  \mtree(v)&:=1 &&\text{for every leaf }v, \\
  \mtree(d)&:=1+\sum_{v\in S} \mtree(v), &&\text{for every }(S,d)\in E, \\
  \mtree(\p)&:=\mtree(v), &&\text{for the sink vertex }v.
\end{align*}
This recursively counts the vertices in subproofs and is equal to the size of a tree unraveling of~\p.

\section{Proof Details}

\subsection{Derivation Structures}

\LemTransformation*
\begin{proof}
Assume that we have a proof $\p$ in $\Rcq(\Tmc\cup\Amc,\q(\vec{a}))$.
We first defer all applications of schema~\EXISTS to the very end of the proof, which is possible since all other inferences remain applicable to any instance $\exists\vec{x}.\,\phi(\vec{x},\vec{a})$ of a CQ $\exists\vec{x},\vec{y}.\,\phi(\vec{x},\vec{y})$. This transformation can only decrease the size of the proof.
We then recursively change the labeling function so that it uses
Skolem terms rather than quantified variables. Specifically, starting from the leafs of~\p, we adapt
inferences of schema~\MOPO so that instead of a rule
$\psi(\vec{y},\vec{z})\to\exists\vec{u}.\,\chi(\vec{z},\vec{u})\in\Tmc$, the
corresponding Skolemized version
$\psi(\vec{y},\vec{z})\to\chi'(\vec{z})\in\Tmc^s$ is used, with existentially
quantified variables in the conclusion replaced by the corresponding Skolem
terms. The resulting hypergraph has the same size, since we only changed the
labeling, and moreover all CQs are ground.
In this process, whenever we apply a rule $\phi(\vec{x},\vec{y})\to\exists\vec{x}.\,\phi(\vec{x},\vec{y})$ generated by \TAUT to a ground CQ using \MOPO, we can omit this subproof since $\exists\vec{x}.\,\phi(\vec{x},\vec{y})$ is satisfied by the same ground atoms used to match $\phi(\vec{x},\vec{y})$.
Next, we split each modified \MOPO inference into a corresponding set of \gMOPO inferences -- this replaces each vertex~$v$ by at most $|\Tmc|$ vertices (at most for each new atom derived from the right-hand size of a rule in~\Tmc),
and thus increases the size of the hypergraph by a factor polynomial in the size of~$\p$.
Even though \Rsk contains no version of the \TAUT inference schema to generate copies of atoms, we can always use the same ground atom several times in the same inference if necessary.
At the same time, we remove all \CONJ inferences (since \gMOPO anyway uses multiple 
premises instead of a conjunction) and replace them by a single application of \gCONJ 
(conjoining all atoms relevant for~$\q(\vec{a})$).
This can also only decrease the size of the proof.
Similarly, the final \EXISTS step becomes an instance of \gEXISTS that produces the final conclusion $\q(\vec{a})$.
We obtain a proof in~$\Rsk(\Tmc^s\cup\Amc,\q(\vec{a}))$ that is of size polynomial in the size of $\p$.

Now assume that we have a proof $\p$ in $\Rsk(\Tmc^s\cup\Amc,\q(\vec{a}))$.
We transform $\p$ into a semi-linear proof in $\Rsk(\Tmc\cup\Amc,\q(\vec{a}))$.
In the beginning, we keep the Skolemization, which is eliminated in the last step.
We first collect any
ABox assertions from $\Amc$ that are used in $\p$ into a single CQ using \CONJ.
Then we aggregate inferences of \gMOPO into \MOPO-inferences. Specifically,
provided that for an $\gMOPO$ inference $(S,d)$, all labels of nodes in $S$
occur on a node $v$ that we have already aggregated, we collect all inferences
of the form $(S,d')$ (same premises, different conclusion), and transform them
into a single inference using (a Skolemized version of) \MOPO.
When $S$ contains the same atom multiple times, we first generate an appropriate number of copies using \TAUT and \MOPO; the number of such additional proof steps for each \MOPO-inference is bounded by~$|\Tmc|$ (more precisely, the maximum number of atoms on the left-hand side of any rule in~\Tmc).
%
Depending on which premises are still
needed in later inferences, we keep them in the conclusion of each \MOPO-inference or not. Since we
keep all atoms together in each step, the final application of \gCONJ is not
needed anymore. The resulting proof is de-Skolemized by replacing Skolem terms
by existentially quantified variables again using \gEXISTS. The output of this transformation
is again polynomial in the size of the original proof: the number of initial
applications of \CONJ is bounded by the size of $\p$, the aggregated \MOPO-steps can only decrease the size of the proof, and for each of these steps, we need at most $|\Tmc|$ additional \TAUT- and \MOPO-steps.
\end{proof}

\subsection{Atemporal Queries}

\TheACzero*
\begin{proof}
Fix a set $\NC^n$ of $\leq n$ individual names, and collect in
$\mathfrak{A}$ all of the (constantly many) ABoxes~$\Amc$ using only names from
$\sig(\Tmc)\cup\NC^n$ and $\a$ for which $\R(\Tmc\cup\Amc,\q(\a))$ contains
a proof of (tree) size $\leq n$; the latter could be done, \eg by simple breadth-first search for proofs up to size~$n$.
We can identify every such ABox $\Amc$ with a
CQ $q_\Amc(\a)$ obtained by replacing individual names not in $\a$ by
existentially quantified variables. We now have that, for any ABox~$\Amc$, there
exists an $\Amc'\in\mathfrak{A}$ with $\Amc\models q_{\Amc'}(\a)$ iff there
exists a proof for $\Tmc\cup\Amc\models \q(\a)$ of (tree) size at most~$n$.
Consequently, we can reduce the decision problem in the theorem to deciding
whether~$\Amc$ entails the (fixed) \emph{union of conjunctive queries} (UCQ)
\[
  \bigvee_{\Amc\in\mathfrak{A}}q_\Amc(\a).
\]
Deciding UCQ entailment without a TBox is possible in \ACzero in data complexity~\cite{AbHV-95}.
\end{proof}

\TheACzeroRewritable*
\begin{proof}
Let $Q(\vec{x})$ be a UCQ that is a rewriting of $\q(\vec{x})$ and~\Tmc, \ie such that $\Tmc\cup\Amc\models\q(\a)$ is equivalent to $\Amc\models Q(\a)$ for all ABoxes~\Amc.
For every CQ $q'(\a)\in Q(\a)$, we can
determine the minimal proof (tree) size $n_{q'}$ for $\Tmc\cup\Amc_{q'}\models
\q(\a)$, where $\Amc_{q'}$ is obtained from $q'(\a)$ by replacing every variable
by a fresh individual name.
These ABoxes represent all possibilities of an ABox entailing $\q(\a)$ w.r.t.~\Tmc (up to isomorphism), and hence they can be used as (a fixed number of) representatives in the search for small proofs of the entailment.
The computation of~$n_{q'}$ does not depend on the input data, and hence can be done offline via search in the derivation structure.
Let $n_\textit{max}$ be the maximum of the values~$n_{q'}$. To
every $n\leq n_\textit{max}$, we assign the UCQ
 \[
  Q_n=\bigvee\{q'\in Q\mid n_{q'}\leq n\}
 \]
 Given $\Amc$, $\a$, and $n$, we can now decide whether $\Tmc\cup\Amc\models
\q(\a)$ admits a proof of (tree) size $\leq n$ by 1) computing the UCQ $Q_{n'}$
for $n'=\min(n,n_\textit{max})$, and 2) checking whether $\Amc\models
Q_{n'}(\vec{a})$.
2) is the standard query answering problem which has \ACzero
data complexity. To see that the combined task 1)+2) can be done in \ACzero, we
note that $n_\textit{max}$ does not depend on the data, so that we only need to
process the least $\log_{n_\textit{max}}$ bits of~$n$ to determine~$n'$, which
can be done by a circuit of constant depth.
\end{proof}

\LemProofSize*
\begin{proof}
 \gCONJ and \gEXISTS only need to be applied once, at the very end, to produce
the query $\q(\vec{a})$. For \DLLiteR, the remaining rule \gMOPO is such that it
always has one premise that is a CQ. Consequently, we can always construct a
proof that is composed of $\lvert\q(\vec{x})\rvert$ linear proofs (one for each
atom), which are then connected using $\gCONJ$ and produce the conclusion with
\gEXISTS.
As argued in the main text, we can assume that the nesting depth
of Skolem terms is polynomially bounded by $\lvert\Tmc\rvert$, which means the
number of labels on each path is polynomially bounded by $\lvert\Tmc\rvert$ as
well.
%
%
Additionally, we can always simplify any
proof in which the same label occurs twice along some path, which means that
this polynomial bound transfers also to the depth of our proof. We obtain that
we can always find a proof of polynomial tree size.
\end{proof}

\TheNPDLLite*
\begin{proof}
  The upper bound is shown in the main text, but it still remains to show a matching
  lower bound.

 We reduce the problem of query answering to the given problem. Specifically,
let $\tup{\Tmc,\Amc}$ be a DL-Lite KB, $\q(\vec{x})$ a query and $\vec{a}$ such
that $\lvert{x}\rvert=\lvert{a}\rvert$, and suppose we want to determine whether
$\Tmc\cup\Amc\models \q(\vec{a})$. We know by Lemma~\ref{lem:proof-size}
that,
if $\Tmc\cup\Amc\models \q(\vec{a})$, then there exists a proof for this
in $\Rsk(\Tmc\cup\Amc, \q(\vec{a}))$
that
is of (tree) size at most $n=p(\lvert{\Tmc}\rvert,\lvert\q(\vec{x})\rvert)$, where $p$ is some
polynomial. By using
Lemma~\ref{lem:transformation}, we can also find such a polynomially bounded
number $n$ in case $\R=\Rcq$.

We now
construct a new KB $\Tmc_0\cup\Amc_0$ such that $\Tmc_0\cup\Amc_0\models
\q(\vec{x})$, but only with a proof of (tree) size $>n$.
$\Amc_0$ is obtained from the atoms of~$\q(\a)$ by replacing every quantified variable by a fresh individual name, and each predicate~$P$ by~$P_0$.
$\Tmc_0$ contains for
every
predicate $P$ occurring in $\q$ and every $i\in\{0,\ldots,n\}$ the CI
$P_{i}\sqsubseteq P_{i+1}$, as well as $P_n\sqsubseteq P$. Clearly,
$\Tmc_0\cup\Amc_0\models \q(\vec{a})$, and every proof for this corresponds to a
tree of depth $n+1$, and is thus larger than $n$. Moreover, $\Tmc_0$,
$\Amc_0$, and~$n$ are all of polynomial size in the size of the input to the query answering problem.
Now set $\Tmc_1=\Tmc_0\cup\Tmc$, $\Amc_1=\Amc_0\cup\Amc$. We have
$\Tmc_1\cup\Amc_1\models \q(\a)$; however, a proof of (tree) size~$\leq n$ exists in 
$\R(\Tmc_1\cup\Amc_1,\q(\a))$ iff $\Tmc\cup\Amc\models\q(\a)$.
\end{proof}

\LemPTimeCombined*
\begin{proof}
We construct a compressed version of $\Rsk(\Tmc\cup\Amc,\q(\vec{a}))$ of
polynomial size.
  We introduce for every role~$R$ the individual name $b_{\exists R}$.
 Our compressed derivation structure is defined inductively as follows, where
for a role name~$P$, we identify $P^-(a,b)$ with $P(b,a)$ and $(P^-)^-$ with~$P$.
 \begin{itemize}
  \item every axiom $\alpha\in\Tmc\cup\Amc$ has a node $v_\alpha$ with
$\el(v_\alpha)=\alpha$,
  \item for nodes $v_1$, $v_2$ with $\el(v_1)=A(a)$ and $\el(v_2)=A\sqsubseteq
B$ there is an edge $(\{v_1,v_2\},v_3)$, where $\el(v_3)=B(a)$.
  \item for nodes $v_1$, $v_2$ with $\el(v_1)=P(a,b)$ and
$\el(v_2)=P\sqsubseteq Q$, there is an edge $(\{v_1,v_2\},v_3)$ with
$\el(v_3)=Q(a,b)$,
  \item for nodes $v_1$, $v_2$ with $\el(v_1)=A(a)$ and
$\el(v_2)=A\sqsubseteq\exists P$, there is an edge $(\{v_1,v_2\},v_3)$
with $\el(v_3)=P(a,b_{\exists P^-})$,
  \item for nodes $v_1$, $v_2$ with $\el(v_1)=P(a,b)$ and
$\el(v_2)=\exists P\sqsubseteq A$, there is an edge $(\{v_1,v_2\},v_3)$
with $\el(v_3)=A(a)$,
  \item for nodes $v_1$, $v_2$ with $\el(v_1)=P(a,b)$ and
$\el(v_2)=\exists P\sqsubseteq\exists Q$, there is an edge
$(\{v_1,v_2\},v_3)$
with $\el(v_3)=Q(a,b_{\exists Q^-})$.
 \end{itemize}
 Due to the conclusions with the fresh individual names, the inferences in this
compressed derivation structure are not sound, so that it is not really a
derivation structure. But because its size is polynomial, we can use the \PTime
procedure from~\cite[Lemma~11]{ABB+-CADE21} to compute for every node $v$ a \enquote{proof} of
minimal size. To use these proofs to construct a tree proof in $\Rsk(\Tmc\cup\Amc,\q(\vec{a}))$,
we still need to match the variables in $\q(\vec{a})$ to the constants in the
derivation structure.

For every pair $\tup{t_1,t_2}$ of terms occurring in $\q(\vec{a})$ together in
an atom, and every possible assignment $(t_1\mapsto a_1, t_2\mapsto a_2)$ of
these terms to individual names from the compressed derivation structure, we
assign a cost $\gamma(t_1\mapsto a_1, t_2\mapsto a_2)$ that is the sum of the
minimal proof sizes for every atom in $\q(\vec{a})$ that contains only $t_1$
and $t_2$, with these terms replaced using the assignment. We build a labeled
graph, the \emph{cost graph}, with every node a mapping from one term in
$\q(\vec{a})$ to one constant in our compressed derivation structure, and every
edge between two nodes $(t_1\mapsto a_1)$ and $(t_2\mapsto a_2)$ labeled with
the cost $\gamma(t_1\mapsto a_1, t_2\mapsto a_2)$ (no edge if there is no edge
between $t_1$ and $t_2$ in the Gaifman graph). Every edge in the cost graph
corresponds to an edge in the Gaifman graph of $\q(\vec{a})$, but the same edge
in the Gaifman graph can be represented by several edges in the cost graph.
We can thus transform the cost graph into a directed acyclic graph, making
sure that for every edge $(t_1\mapsto a_1, t_2\mapsto a_2)$, the edge
$(t_1,t_2)$ points from the root towards the leafs of the Gaifman graph, where
we choose an arbitrary node as the root.

  We now eliminate assignments from the cost graph starting from the leafs:
 \begin{itemize}
  \item Consider a node $(t_1\mapsto a_1)$ and for some term $t_2$,
  all outgoing edges of the form $(t_1\mapsto a_1, t_2\mapsto a_2)$. Once all
the nodes $t_2\mapsto a_2$ have already been visited by the algorithm, assign to
each edge $(t_1\mapsto a_1,t_2\mapsto a_2)$ a combined cost obtained by adding
to $\gamma(t_1\mapsto a_1,t_2\mapsto a_2)$ the costs of every edge reachable
from $(t_2\mapsto a_2)$, and remove all edges for which the resulting value is
not minimal. In case several edges have a minimal value assigned, choose an
arbitrary edge to remove, so that we obtain a unique edge $(t_1\mapsto a_1,
t_2\mapsto a_2)$ for $t_1\mapsto a_1$ and $t_2$.

 \item If an assignment $(t\mapsto a)$ has no incoming edges, remove it from
the cost graph.
 \end{itemize}
 The algorithm processes each edge in the polynomially sized cost graph at most
once, and thus terminates after a polynomial number of steps with an assignment
of variables to constants appearing in the compressed derivation structure.
We can use this assignment to first construct a minimal proof for $\q(\a)$ over the
compressed derivation structure, where every atom in $\q(\a)$ has its own
independent minimal proof. Note that we cannot obtain a smaller proof in
$\q(\a)$, since every atom has a minimal proof assigned, and our elimination
procedure made sure that there cannot be a different choice of assignments of
terms in $\q(\a)$ to constants that would lead to smaller proofs anywhere else.

We then replace every constant with the
corresponding term in  $\Rsk(\Tmc\cup\Amc,\q(\a))$, starting from the original
individual names and following the structure of the proof, to obtain the desired
proof of minimal tree size.
For example, we would replace an atom $P(a,b_{\exists P^-})$ that is derived in
the proof from $A(a)$ and $A\sqsubseteq\exists P$ by $P(a,f(a))$, where $f$ is
the Skolem function for the existentially quantified variable in
$A\sqsubseteq\exists P$, and replace $b_{\exists P^-}$ by $f(a)$ also in
subsequent proof steps, provided it refers to a successor of $a$. More
generally, we replace $P(t,b_{\exists P^-})$ , where $t$ is already a Skolem
term, by $P(t,f(t))$, whenever it is derived from $A(t)$ and
$A\sqsubseteq\exists P$.
\end{proof}

We prove Theorem~\ref{the:np-hard-trees} in the following two lemmas.
\begin{restatable}{lemma}{LemNPHardModified}\label{lem:np-hard-modified}
  For tree-shaped CQs, $\OPx'(\Lmc,\mtree)$ and $\OP\sk(\Lmc,\msize)$ are \NP-hard.
\end{restatable}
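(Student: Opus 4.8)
The plan is to prove Lemma~\ref{lem:np-hard-modified} by a reduction from a classic \NP-complete problem---I expect \textsc{Set Cover} (equivalently \textsc{Vertex Cover} on hypergraphs) to be the natural source, since the modified rule \gEXISTSb and the \TAUT-based duplication in \Rcq let several atoms of the query be ``covered'' by a single proven atom, and we want to pick the cheapest way to do so. Concretely, given a universe $U=\{e_1,\dots,e_m\}$ and a family $\family=\{S_1,\dots,S_k\}$ of subsets of $U$ together with a target $\ell$, I would build a tree-shaped Boolean CQ whose atoms correspond to the elements $e_i$ and whose answer-variable structure forces each $e_i$-atom to be matched, under \gEXISTSb, to one of the ``set atoms'' provided in the (TBox-free) ABox; paying for a set atom once then lets it be reused for every element it contains. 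The size/tree-size budget $n$ is then chosen so that a proof of that size exists iff there is a subfamily of size $\le\ell$ covering $U$. Since there is no TBox, every ABox atom is already a one-step subproof, and the only nontrivial proof steps are the final \gEXISTSb (or \TAUT$+$\MOPO$+$\EXISTS in the \Rcq case) and the combining \gCONJ/\CONJ steps, so the proof size is essentially $1+(\text{number of distinct ABox atoms used})+(\text{bookkeeping})$, and minimizing it is exactly minimizing the cover.

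First I would fix the encoding carefully. For \gEXISTSb and tree size: put into $\Amc$ one atom $R_j(c,d_j)$ for each set $S_j$ (all sharing a common constant $c$ so the query can be tree-shaped around a central variable), and let $\q$ be $\exists y, z_1,\dots,z_m.\ \bigwedge_{i=1}^m A_i(y,z_i)$ where the predicate $A_i$ is chosen so that $A_i(y,z_i)$ can be matched by $R_j(c,d_j)$ precisely when $e_i\in S_j$ (e.g.\ by using the same predicate symbol $R_j$ for all $A_i$ with $e_i\in S_j$; since the $A_i$ are spread across distinct atoms this is compatible with tree-shapedness). Then a proof has to pick, for each atom of $\q$, an ABox atom to derive it from, and by the shape of \gEXISTSb the chosen ABox atoms are exactly a cover of $U$; \textbf{crucially}, \gEXISTSb allows the existentially quantified conclusion to collapse several syntactically-distinct query atoms onto the \emph{same} ABox atom, so the tree size counts each used ABox atom only once. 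Setting $n$ to the appropriate polynomial in $m$ plus $\ell$ finishes that direction, and completeness (a cover of size $\le\ell$ yields a proof of size $\le n$) is immediate by running the construction backwards. For the \Rsk/\msize case the same ABox and query work: here $\gEXISTS$ requires the premise to be a conjunction $\phi(\vec t)$ syntactically equal (up to $\sigma$) to the desired query, but we may repeat the same proven ground atom multiple times inside that conjunction, and \msize counts vertices---hence a reused atom is counted once---so again the minimal-size proof corresponds to a minimal cover. For \Rcq with tree size, one simulates \gEXISTSb by a \TAUT step producing a duplication rule together with \MOPO, and then the final \EXISTS; the extra \TAUT/\MOPO steps are a constant per duplication, so the budget just gets shifted by a computable polynomial amount.

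The key steps, in order, are: (1) state the source problem (\textsc{Set Cover} with bound $\ell$) and recall its \NP-completeness; (2) describe the construction of $\Amc$, $\q$, $\vec a$ (Boolean, so $\vec a$ empty), verify $\q$ is tree-shaped and that $\emptyset\cup\Amc\models\q$ holds unconditionally (so the prerequisite $\Tmc\cup\Amc\models\q(\vec a)$ of the decision problem is satisfied), and fix the numeric bound $n$ as a function of $m$ and $\ell$; (3) prove soundness of the reduction: from a proof of (tree) size $\le n$ extract, by reading off which ABox atom each query atom is matched to, a cover of size $\le\ell$; (4) prove completeness: from a cover of size $\le\ell$ build an explicit proof (one subproof per chosen ABox atom, reused across covered query atoms, combined by \gCONJ/\CONJ, finished by \gEXISTSb / \gEXISTS with repetitions / \TAUT-sequence $+$\EXISTS), and check its (tree) size is $\le n$; (5) note the construction is polynomial-time and binary $n$ suffices. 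I expect the main obstacle to be in step (3): I need the reduction to be \emph{tight}, i.e.\ to argue that an \emph{optimal} proof cannot ``cheat'' by proving query atoms in some unintended way (there is no TBox, so the only atoms derivable are those already in $\Amc$, which helps), and in particular to rule out that different matchings of the answer-free variables $y,z_i$ give spuriously small proofs; handling the interaction between the single shared variable $y$ in the tree-shaped query and the requirement that all chosen ABox atoms share the constant $c$ is the delicate point, and I would resolve it by choosing the predicates so that any homomorphic match is forced to send $y\mapsto c$, making the matching entirely about which $d_j$'s are used.
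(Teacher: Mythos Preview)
Your overall strategy---reduce from a covering-style problem and exploit that \gEXISTSb (respectively, \msize together with \gEXISTS) lets several query atoms collapse onto one ground atom---is on the right track and is essentially what the paper does (though from SAT rather than \textsc{Set Cover}). However, your concrete encoding has a real gap. A query atom carries exactly one predicate symbol, and without a TBox it can only be matched by an ABox atom with the \emph{same} predicate. Hence there is no way to choose ``the predicate $A_i$ \dots so that $A_i(y,z_i)$ can be matched by $R_j(c,d_j)$ precisely when $e_i\in S_j$'': if $e_i$ lies in two sets $S_1$ and $S_2$, the single atom $A_i(y,z_i)$ cannot simultaneously be an $R_1$-atom and an $R_2$-atom. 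Your parenthetical fix (``using the same predicate symbol $R_j$ for all $A_i$ with $e_i\in S_j$'') does not help, since it asks $A_i$ to equal several distinct $R_j$ at once. With one ABox atom per set, each carrying its own predicate, the collapsing you rely on cannot happen, and step~(3) of your plan (extracting a cover from a small proof) breaks down because there is no choice to extract.

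The paper avoids this by separating the two roles. The ``which literal covers this clause'' choice is encoded by auxiliary binary atoms $c(x_i^{(c)},x_i^{(p)})$, with one ABox atom $c(c_i,l)$ per clause--literal incidence; these are all distinct and never collapse. The atoms that \emph{are} meant to collapse all share a \emph{single} predicate~$T$: the query has $T(x_i^{(p)})$ for every clause, and the ABox has $T(p_j)$ and $T(\overline{p}_j)$ for every variable. Now \gEXISTSb can map many $T(x_i^{(p)})$ to the same ground $T(l)$, and (because CQs are sets of atoms) the premise contains each chosen $T(l)$ once; together with the tautological clauses $p_j\vee\overline{p}_j$, the tree-size budget forces exactly $n$ distinct $T$-atoms, which is achievable iff the chosen literals form a consistent assignment. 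Your \textsc{Set Cover} idea can be repaired along the same lines---membership atoms $M(e_i,z_i)$ in the query, $M(e_i,s_j)$ in the ABox for $e_i\in S_j$, plus a single-predicate family $T(z_i)$ in the query and $T(s_j)$ in the ABox---but not with the one-atom-per-set encoding you sketched.
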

\begin{proof}
 We reduce SAT to $\OPx'(\Lmc,\mtree)$. Let $c_1,\ldots,c_m$ be a set of clauses
over propositional variables $p_1,\ldots,p_n$. Each clause $c_i$ is a disjunction 
of literals of the form $p_j$ or $\overline{p_j}$, where the latter denotes the 
negation of $p_j$. In the following, we assume clauses to be represented as sets of literals.
\Wlog we assume that for every
variable $p_i$, we also have the clause $p_i\vee\overline{p_i}$. For every variable $p_i$,
we add the assertions $T(p_i)$ and $T(\overline{p}_i)$. 
 For every clause $c_i$ and every literal $l_j\in c_i$, we add an assertion
$c(c_i,l_j)$.
Furthermore, if $i<m$, we add the role
assertion $r(c_i,c_{i+1})$. The conjunctive query $\q$ is Boolean and contains
for every clause $c_i$ the atoms
 $
  c(x_i^{(c)}, x_i^{(p)}), T(x_i^{(p)})
 $
 and moreover, if $i<m$, $r(x_i^{(c)}, x_{i+1}^{(c)})$,
 so that the query is
tree-shaped. 
We have
$\Amc\models \q$ independently of whether the SAT problem has a
solution or not.

 We set our bound as $k=2+m+(m-1)+n$, which distributes as follows in a proof 
 if the set of clauses is satisfiable. Assume that $a:\{p_1,\ldots,p_n\}\rightarrow\{0,1\}$ is
 a satisfying assignment for our set of clauses. 
 \begin{itemize}
  \item 2 vertices are needed for the conclusions of \gCONJ and \gEXISTSb,
  \item $m$ vertices contain, for each clause $c_i$, the atom $c(c_i, l)$, where $l\in c_i$ 
  is made true by the assignment $a$,
  \item $m-1$ vertices contain the atom $r(c_i,c_{i+1})$ for each $i\in\{1,\ldots
m-1\}$,
  \item for each variable $p_i$, depending on whether $a(p_i)=1$ or $a(p_i)=0$, 
    we use either $T(p_i)$ or $T(\overline{p}_i)$. This needs another $n$ vertices. 
 \end{itemize}
 Note that there cannot possibly be a smaller proof, since every atom in the
query needs to be matched by some ABox assertion. Correspondingly, if the 
SAT problem has a solution, we can construct a proof of the desired tree size in $\Rsk'$, and 
if there is a proof of the desired tree size, we can extract a solution for the SAT problem 
from it. It follows that the problem is \NP-hard.
%

\blue{The same arguments apply to~\msize \wrt the original deriver \Rsk since a single atom $T(p_i)$ or $T(\overline{p}_i)$ can be used multiple times in one application of~\gEXISTS, but is only counted once.}
\end{proof}

\begin{restatable}{lemma}{LemNPHardCQStructure}\label{lem:np-hard-cq-structure}
  For tree-shaped CQs, $\OP\cq(\Lmc,\mtree)$ is \NP-hard.
\end{restatable}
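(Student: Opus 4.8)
The plan is to reduce SAT to $\OP\cq(\Lmc,\mtree)$, adapting the gadget from the proof of Lemma~\ref{lem:np-hard-modified}; as there, no TBox is needed, so the result holds for every $\Lmc$. Given clauses $c_1,\dots,c_m$ over variables $p_1,\dots,p_n$, and assuming \wlg that $p_j\vee\overline{p_j}$ occurs among the $c_i$ for every $j$, let $\Amc$ contain $T(p_j)$, $T(\overline{p_j})$ for all $j$, the assertions $c(c_i,l)$ for all $i$ and $l\in c_i$, and $r(c_i,c_{i+1})$ for $i<m$; the Boolean, tree-shaped query $\q$ has atoms $c(x_i^{(c)},x_i^{(p)})$, $T(x_i^{(p)})$ for each clause $c_i$ together with $r(x_i^{(c)},x_{i+1}^{(c)})$ for $i<m$, so that $\Amc\models\q$ regardless of satisfiability. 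The key idea is that a single use of \TAUT lets $\Rcq$ mimic the ``ungrouping'' power that $\gEXISTSb$ has in $\Rsk'$: the tautological rule whose body is $\bigwedge_{i=1}^m\big(c(y_i,x_i)\wedge T(x_i)\big)$ and whose head existentially re-quantifies the slots $x_1,\dots,x_m$ can be applied by \MOPO with the substitution $x_i\mapsto l_i$, $y_i\mapsto c_i$, where $l_i\in c_i$ is the literal chosen by a truth assignment; its body then matches the already-assembled ground CQ using only the distinct atoms $\{c(c_i,l_i)\}\cup\{T(l_i)\}$, while the head re-introduces $m$ pairwise distinct fresh variables as the second arguments of the $c$-atoms. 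Replacing the matched subset by the head and finishing with one \EXISTS step that turns each constant $c_i$ into the answer variable $x_i^{(c)}$ produces exactly~$\q$.

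For the reduction I would set the bound $k$ to the tree size of this proof. In the forward direction, a satisfying assignment uses only $n$ distinct literals, so its proof consists of $2m-1+n$ ABox leaves, a binary-\CONJ combination of them, one \TAUT vertex, one \MOPO vertex and one \EXISTS vertex; since this proof is already a tree, $\mtree\le k=4m+2n$. For the converse, the labels of the leaves of any proof~$\p$ of~$\q$ form a sub-ABox $\Amc'\subseteq\Amc$ with $\Amc'\models\q$ (the \TAUT-rules are valid), hence there is a homomorphism $h\colon\q\to\Amc'$; $\Amc'$ must then contain the full $r$-chain ($m-1$ assertions), at least one assertion $c(c_i,l_i)$ per clause (with the $c_i$ pairwise distinct), and a $T$-assertion for every literal in the image $\{\,h(x_i^{(p)})\,\}$. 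Counting leaf-occurrences and internal vertices in the tree unraveling of $\p$ bounds $\mtree(\p)$ from below by a quantity growing with the number $t'$ of distinct literals used, and because the tautological clauses force $t'\ge n$, with $t'=n$ exactly when the chosen literals form a consistent, clause-satisfying assignment, a proof of tree size $\le k$ can exist only if SAT is satisfiable. This establishes \NP-hardness of $\OP\cq(\Lmc,\mtree)$ for tree-shaped CQs, and together with Lemma~\ref{lem:np-hard-modified} completes the proof of Theorem~\ref{the:np-hard-trees}.

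The step I expect to be the main obstacle is making the accounting tight enough that $k$ \emph{strictly} separates the satisfiable from the unsatisfiable instances: the straightforward lower bound on $\mtree(\p)$ in terms of $t'$ differs from the tree size of the constructed proof by a small additive constant, so the cases $t'=n$ and $t'\ge n+1$ need not be separated by $k$ as stated. I would resolve this by an amplification step — for instance replacing each atom $T(x_i^{(p)})$ by a path $T_1(x_i^{(p)}),T_2(x_i^{(p)},y_1^{(i)}),\dots,T_K(y_{K-1}^{(i)},x_i^{(p)})$ of length $K$, with matching ABox paths of length $K$ anchored at each literal, so that ``using'' a literal costs $\Theta(K)$ leaves (and checking that $\Rcq$ can still re-instantiate these paths with one enlarged \TAUT rule, while the query stays tree-shaped). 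Choosing $K$ larger than the $O(1)$ slack in the bound then makes the contribution of $t'$ dominate, so that a proof of the chosen tree size exists iff SAT is satisfiable.
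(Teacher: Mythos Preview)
Your reduction and the key trick — using a single \TAUT instance together with \MOPO to mimic the atom-duplication power of $\gEXISTSb$ — is exactly the paper's approach. Two remarks.

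First, the paper instantiates \TAUT with the \emph{whole} query, i.e.\ the rule $\q(\vec{x})\to\exists\vec{x}.\,\q(\vec{x})$, rather than only the $c$- and $T$-atoms as you do. Because the body then already contains the $r$-chain, the matching substitution sends every variable of~$\q$ to a constant and the \MOPO step outputs~$\q$ directly; no trailing \EXISTS is needed. This yields the slightly tighter bound $k=4m+2n-1$ instead of your $4m+2n$.

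Second, on the obstacle you flag: you are right that the lower-bound side needs care — the paper is in fact just as informal here, asserting only that ``every leaf of this proof has to be used''. However, the amplification gadget is unnecessary; a short case analysis already closes the gap. Any proof~$\p$ uses some set $\Amc'\subseteq\Amc$ of ABox assertions with $\Amc'\models\q$; the $r$-chain forces $x_i^{(c)}\mapsto c_i$, so $\Amc'$ contains all $m-1$ $r$-atoms, $m$ pairwise distinct $c$-atoms $c(c_i,l_i)$, and the corresponding $t'$ distinct $T$-atoms, where the tautological clauses force $t'\ge n$, with $t'=n$ exactly when the $l_i$ form a satisfying assignment. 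Now split on the number~$\tau$ of \TAUT leaf-occurrences in the tree unravelling of~$\p$. If $\tau\ge 1$, the total leaf count is at least $(2m-1+t')+1$, so $\mtree(\p)\ge 4m+2t'-1$; in the unsatisfiable case $t'\ge n+1$ and this exceeds~$k$. If $\tau=0$, the proof uses only \CONJ and \EXISTS, neither of which can increase the number of atoms in a CQ; hence the number of ABox leaf-occurrences is at least $|\q|=3m-1$ and $\mtree(\p)\ge 6m-2$. Since an unsatisfiable instance must contain at least one non-tautological clause, $m\ge n+1$, and then $6m-2>4m+2n-1=k$. So the threshold already separates the two cases without any padding.
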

%
\begin{proof}
 The central observation is that we can simulate in \Rcq the behavior of
$\gEXISTSb$ when copying atoms. Specifically, fix an inference of $\gEXISTSb$
with premise $\phi(\vec{t})$, conclusion $\exists\vec{x}.\,\phi'(\vec{x})$ and
substitution $\sigma$ s.t.\ $\exists\vec{x}.\,\phi'(\vec{x})\sigma=\phi(\vec{t})$, and
let $\alpha(\vec{x_1})$, $\alpha(\vec{x_2})\in\phi'(\vec{x})$ be such that
$\vec{x_1}\sigma=\vec{x_2}\sigma$.
To obtain a similar effect in $\Rcq$, we
would use \TAUT to derive
$\alpha(\vec{x_1})\rightarrow\exists\vec{x_1}.\,\alpha(\vec{x_1})$, which is then used
with \MOPO and the current CQ to obtain a query in which both
$\alpha(\vec{x_1})$ and $\alpha(\vec{x_2})$ occur (recall \MOPO can be used
in such a way that the new atoms are not replacing others, but are simply
added). Note that this way, we can duplicate an arbitrary number of variables
using two steps.
Let $\Amc$ and $\q$ be as in the proof of Lemma~\ref{lem:np-hard-modified}, and assume 
the SAT-problem has a satisfying assignment $a$.
We can then
construct a tree proof as follows, where this time, we collect the different atoms 
one after using $\CONJ$ into a single query. Since these inferences always have two 
premises, we only count the leaves in the following, as a binary tree with $n$ leaves and all 
other vertices binary always has $2n-1$ vertices in total. Specifically, this means that the 
number of vertices is independent on how we organize the inferences listed in the following.  
\begin{itemize}
  \item We need to collect all clauses of the form $r(c_i,c_{i+1})$, and $c(c_i,l_j)$, where 
   $l_i$ is some literal in $c_i$ evaluate to true under the chosen assignment $a$. This gives 
   $2m-1$ leaves, 
  \item Another $n$ leaves are needed to add, for each variable $p_i$, $T(p_i)$ if $a(p_i)=1$,
  and $T(\overline{p}_i)$ if $a(p_i)=0$.
%
\end{itemize}
Since this makes $2m-1+n$ leaves, we obtain that the corresponding proof must have 
$4m+2n-3$ nodes in total, independently on how we organize these inferences. We instantiate 
\TAUT with $\q(\vec{x})\rightarrow\exists\vec{x}\q(\vec{x})$, where 
$\q(\vec{x})$ is our query, containing as quantified variables exactly $\vec{x}$. 
Since the left hand side matches our ground query constructed so far, we apply it 
as final step to produce the conclusion, obtaining a tree proof of size $4+2n-1$.

Similar as before, we argue that there cannot be possibly a smaller proof for the query, since 
every leaf of this proof has to be used. Consequently, if we find a proof of tree size $4+2n-1$, we 
can construct a satisfying assignment from it, and if there is a satisfying assignment, we 
can construct a proof of size $4+2n-1$. It follows that the problem must be \NP-hard.
\end{proof}

\subsection{Temporal Queries}

\TempNPupper*
\begin{proof}
The \NP lower bound follows from the atemporal case, see Theorem~\ref{th:TheNPDLLite}. Note that the temporal inference rules do not operate on domain elements.

For the upper bound, first of all we need to show that the temporal derivation structures are sound and complete, \ie if $\Tmc\cup\Amc\models\q(\a,\iota)$ holds, then we can provide a proof for it.

The soundness of the new inferences can be easily verified from the semantics.

Completeness for both \Rtcq and \Rtsk can be shown by induction on the shape of the MTCQ $\q$. Namely, as common in the datalogMTL literature (\eg~\cite{DBLP:conf/ijcai/WalegaGKK19,DBLP:conf/ijcai/WalegaGKK20}), we organize the timeline by \emph{rulers}: we divide ABox facts into (finitely many) non-overlapping intervals whose time
points satisfy the same ABox facts. Ruler intervals can be non-regularly distributed. Since TBoxes are atemporal, inside a ruler interval all ontological inferences are the same for all timepoints.
As the basis of the induction over $\q$, we take the assumption that there are proofs for $\q(\a,\rho)=(\exists \y.\phi(\a,\y))@\rho$ if $\Tmc\cup \Amc\models \q(\a,\rho)$, for $\rho$ being a ruler interval and $\phi$ a CQ (see the correctness argument Section~\ref{sec:DS4OMQA}).
Up to this point, only need the temporal versions of \MOPO, \gMOPO, \etc The next phase of inferring temporal operators does not require \MOPO and \gMOPO anymore.
We now apply \COAL whenever possible and keep \SEP as the last step; note that \COAL aggregates formulas involving different variable names.
For temporal operators we can use similar arguments as in the case of datalogMTL over $\mathbb{R}$~\cite{BKR+-JAIR18}: our inference schemas were inspired by the chase procedure there. The cases $\land$ and $\lor$ can be shown by contradiction.

Next, for the procedure above, we can show that it builds a proof of bounded size.
\begin{lemma}\label{lem:bound-size}
There exist polynomials $p_1$, $p_2$ such that for any \DLLiteR TBox $\Tmc$, temporal ABox $\Amc$,
MTCQ $\q(\vec{x},w)$ with $\Tmc\cup\Amc\models
\q(\vec{a},\iota)$, there exists a proof
in $\Rtsk(\Tmc\cup\Amc,\q(\vec{a},\iota))$
whose tree size is bounded by $p_1(\lvert\Tmc\rvert,\lvert
\q(\vec{x})\rvert, \lvert \tem(\q) \rvert + \lvert \iota \rvert)$, if $\lvert \iota \rvert$ is finite, and by $p_2(\lvert\Tmc\rvert,\lvert
\q(\vec{x})\rvert, \lvert \Amc \rvert)$, otherwise.
\end{lemma}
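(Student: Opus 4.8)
The plan is to prove Lemma~\ref{lem:bound-size} by structural induction on the MTCQ $\phi$, following the completeness argument already sketched for $\Rtsk$. First I would fix the \emph{ruler decomposition} of the timeline induced by $\Amc$, i.e. the partition of $\Zbb$ into the finitely many maximal intervals on which the same set of ABox facts holds, together with the canonical temporal model $\Imf$ of $\Tmc\cup\Amc$. For every subformula $\chi$ of $\phi$ I would consider its \emph{certain region} $R_\chi := \{t\in\Zbb \mid \Imf, t\models\chi(\a)\}$; since all MTCQs here are positive and $\Imf$ is a universal model, $\Tmc\cup\Amc\models\chi(\a)@\iota$ holds iff $\iota\subseteq R_\chi$, and each $R_\chi$ is a finite union of intervals. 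The goal is then to construct, for every subformula $\chi$ and every maximal interval $J$ of $R_\chi$ that is relevant for $\q(\a,\iota)$, a proof of $\chi(\a)@J$ in $\Rtsk$ and to bound the total tree size of these sub-proofs.

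For the \textbf{base case}, where $\chi$ is a CQ $\psi$, I would use that on a single ruler interval $\rho$ the certain region is constant (TBoxes are atemporal), so $\Tmc\cup\Amc\models\psi(\a)@\rho$ reduces to the atemporal entailment of $\psi(\a)$ from $\Tmc$ together with the ABox slice active on $\rho$; by a temporalised reading of Lemma~\ref{lem:proof-size} this has a proof of tree size polynomial in $|\Tmc|$ and $|\psi|$, using the temporal versions of \gMOPO. A maximal interval $J$ of $R_\psi$ is a union of consecutive ruler intervals $\rho_1,\dots,\rho_k$, so I would prove $\psi(\a)@\rho_i$ for each, apply \SEP to clip the two boundary intervals to $J$, and combine everything with a single \COAL into $\psi(\a)@J$. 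The number $k$ of ruler intervals involved is bounded by the number of ruler intervals meeting the relevant window, which is bounded by $|\iota|$ plus the total temporal reach of $\q$ (hence a polynomial in $|\iota|$ and $|\tem(\q)|$) when $\iota$ is finite, and by the number of endpoints occurring in $\Amc$ (hence a polynomial in $|\Amc|$) when $\iota$ is infinite — in the latter case one additionally uses that $R_\psi$ stabilises to a periodic pattern outside the finitely many ABox breakpoints, so only one infinite boundary interval is ever needed.

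The \textbf{inductive step} then handles the Boolean and temporal operators. For each operator I would read off the certain region of $\chi$ from those of its immediate subformulas using exactly the interval arithmetic built into the inference rules: $R_{\chi_1\land\chi_2} = R_{\chi_1}\cap R_{\chi_2}$, $R_{\chi_1\lor\chi_2} = R_{\chi_1}\cup R_{\chi_2}$, $R_{\boxplus_I\chi_1}$ is obtained by shrinking each interval of $R_{\chi_1}$ by $I$, and $R_{\chi_1\luntil_I\chi_2}$ by the shift-and-intersect operation underlying the $(\luntil)$-rule (and symmetrically for $\boxminus$ and $\lsince$). The crucial quantitative point is that each of these operations increases the number of constituent intervals only \emph{additively} in the number of intervals of the operands, so the number of maximal intervals of every $R_\chi$ stays bounded by a polynomial in $|\q|$ and the number of relevant ruler intervals. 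A proof of $\chi(\a)@J$ for a maximal interval $J$ of $R_\chi$ is assembled from the inductively given proofs of the immediate subformulas on suitable maximal intervals, each clipped by \SEP, plus one application of the matching temporal inference; for $\lor$ one additionally splits $J$ into finitely many pieces according to which disjunct is entailed at each point, applying \LOR and one \COAL, where the number of pieces is again bounded by the number of maximal intervals of $R_{\chi_1}$ and $R_{\chi_2}$. Unwinding the resulting recursion on tree size — number of subformulas, times number of maximal intervals per subformula, times the polynomial base-case bound, plus $O(1)$ glue inferences per step — yields the two polynomials $p_1$ (finite $\iota$) and $p_2$ (infinite $\iota$).

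The hard part will be controlling the combinatorics so that the final bound is genuinely \emph{polynomial in $|\q|$} rather than exponential: this needs (i) that the temporal and Boolean operators do not cause a multiplicative blow-up of the number of intervals in the certain regions, and (ii) that in the tree unravelling each $\chi(\a)@J$ is used only polynomially often by its unique parent subformula, so that these multiplicities do not compound along the depth of the formula tree — in particular the $\luntil$/$\lsince$ and $\lor$ cases, where a fixed interval of an operand can be reused for several intervals of $\chi$, have to be analysed carefully. A secondary technical obstacle is the infinite-$\iota$ case, where one must argue that the certain region of every subformula is eventually periodic beyond the last ABox breakpoint, so that it — and hence the proof built from it — admits a finite description whose size is polynomial in $|\Amc|$, $|\Tmc|$ and $|\q|$.
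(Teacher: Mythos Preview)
Your proposal follows essentially the same route as the paper: restrict attention to the ruler intervals within a bounded window (all of them when $\iota$ is infinite, only those meeting $[t_1-|\tem(\q)|,\,t_2+|\tem(\q)|]$ when $\iota=[t_1,t_2]$ is finite), invoke Lemma~\ref{lem:proof-size} for the CQ base case on each ruler interval, then process the temporal and Boolean operators bottom-up with \COAL after each layer and a single \SEP at the end.

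Your development is in fact more careful than the paper's, which simply asserts counts such as ``at most $|\q|\cdot|\Amc|$ applications of \COAL'' and ``each temporal schema is applied at most $|\Amc|$ times'' without explicitly tracking multiplicities in the tree unravelling. The two concerns you flag in the last paragraph---that the number of intervals in $R_\chi$ not blow up multiplicatively, and that the tree-size multiplicities not compound along the depth of the formula tree---are genuine and are glossed over in the paper. They are resolvable along the lines you sketch: the key additional observation is that for the finite-$\iota$ case the relevant window contains only $2|\tem(\q)|+|\iota|$ integer time points, so \emph{every} $R_\chi$ restricted to that window has at most that many maximal intervals regardless of the depth of~$\chi$, which caps the per-subformula multiplicity uniformly and prevents compounding; the infinite-$\iota$ case is handled analogously with the number of ruler endpoints in~$\Amc$ playing the role of the window size. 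With this observation in hand, your additive-growth argument and the disjointness of the intervals requested from each subformula occurrence suffice.
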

\begin{proof}
The number of ruler intervals depends on the ABox: it is equal to the number of different endpoints mentioned in ABox ${}+2$ for infinities. Thus, in the case where $\iota$ is unbounded, we need to take into consideration all ruler intervals, for which we can find proofs for CQs of polynomial size according to Lemma~\ref{lem:proof-size}. 
(we only need to consider the CQs occurring in $\q$). We need at most $\lvert \q \rvert \cdot \lvert \Amc \rvert$ applications of \COAL to merge ruler intervals for all CQs (recall that \COAL takes finitely many premises); the dependence on $\lvert \Amc \rvert$ comes from the fact that a CQ may be satisfied in several non-contiguous intervals.
Each of the temporal schemas $\mathbf{(\boxplus)}$ \etc corresponds to one kind of temporal operator; again, we need to apply them to a fixed CQ (or pair of CQs) at most $\lvert \Amc \rvert$ times. We run \COAL after each, resolving one temporal operator for all intervals, \eg $\phi \luntil_{[0,3]}\psi$ over the facts $\phi@[0,5]$, $\psi@[3]$, $\psi@[6]$ is derived separately for the intervals $[0,3]$ and $[3,6]$, which we can coalesce.
The last step is \SEP if the desired certain answer interval is inside of one we obtain by the procedure above.

When $\iota=[t_1,t_2]$ is finite, we do not need to consider all ruler intervals in $\Amc$: it is enough to concentrate on the ones in $[t_1-\lvert \tem(\q)\rvert, t_2+\lvert \tem(\q)\rvert]$, an interval outside of which no facts can be part of a proof. Clearly, the size of the resulting set of relevant interval rulers is bounded by $2\cdot\lvert \tem(\q)\rvert+\lvert \iota\rvert$ (recall that $\lvert \tem(\q)\rvert$ is finite). Therefore, we can re-apply the rationale as above replacing $\lvert \Amc\rvert$ by the new bound independent of~$\Amc$.
\end{proof}

Finally, observe that the new inferences such as \COAL and $\mathbf{(\boxplus)}$ are orthogonal to the temporalized rules from Section~\ref{sec:DS4OMQA}. Thus, transitioning between derivers $\{\Rcq,\Rsk\}$ maintains the structure of temporal inferences intact and Lemma~\ref{lem:transformation} also holds in the temporal case.
Since Lemma~\ref{lem:bound-size} bounds the size of a proof, by Lemma~\ref{lem:transformation}, the \NP upper bound follows.

We can prove the rewritability claim by induction on the structure of the MTCQ.
CQs are UCQ-rewritable by assumption.
For the temporal dimension, over \Zbb any MTCQ with bounded intervals can be transformed into an equivalent formula with only \emph{next} temporal operator ($\lnext$ in LTL notation), the semantics of which is the same as of punctual temporal operators $\boxminus_{[1,1]}$ and $\boxplus_{[1,1]}$ in Figure~\ref{fig:tcq-semantics}. For example, by the recursive rewriting procedures below, in finitely many steps we reach a positive formula without $\boxplus$ and $\luntil$:
$$\boxplus_{[r_1,r_2]}\phi \equiv 
\begin{cases}
 \lnext(\boxplus_{[r_1-1,r_2-1]}\phi) , \text{ if } r_1 > 0, \\
 \phi \land \lnext(\boxplus_{[0,r_2-1]} \phi), \text{ if } r_1 = 0 \text{ and } r_2 > 0,\\
  \phi, \text{ otherwise.}
\end{cases}
$$
\noindent and 
 $$
 \phi \luntil_{[r_1,r_2]} \psi \equiv 
\begin{cases} 
 \phi \land \lnext(\phi\luntil_{[r_1-1,r_2-1]} \psi), \text{ if } r_1 > 0, \\
 \psi \lor (\phi \land \lnext(\phi\luntil_{[0,r_2-1]} \psi)), \text{ if } r_1 = 0 \text{ and } r_2 > 0,\\
  \psi, \text{ otherwise.}
 \end{cases}
 $$ 
A celebrated result of Kamp~\cite{phd-kamp} is that, over both $(\Zbb, <)$ and $(\mathbb{R}, <)$, all LTL operators are definable in $\FO(<)$, first-order logic with the built-in linear order. Thus, any MTCQ is $\FO(<)$-rewritable. In terms of circuit complexity, this means that MTCQ answering is in LogTime-uniform \ACzero for data complexity~\cite{Immerman99}. 


We can see that our \FO-rewriting is a positive-existential (PE) first-order formula, \ie uses only $\land$, $\lor$, and existential quantification with the linear order seen as a predicate. Such a formula can be directly converted into an equivalent UCQ~\cite{AbHV-95}.
The second statement of the proof follows from as similar argument as in the proof of Theorem~\ref{th:TheACzeroRewritable}, \ie we can pre-compute the minimal proof (tree) size for each of the CQs in this (fixed) UCQ.
\end{proof}

\end{document}